\newcommand{\bbR}{\mathbb{R}}
\newcommand{\bbE}{\mathbb{E}}
\newcommand{\bbS}{\mathbb{S}}
\newcommand{\calN}{\mathcal{N}}
\newcommand{\bfb}{\mathbf{b}}
\newcommand{\bfx}{\mathbf{x}}
\newcommand{\dd}{\mathrm{d}}
\newtheorem{thrm}{Theorem}[section]
\newtheorem{dfntn}{Definition}[section]
\newtheorem{crllr}[thrm]{Corollary}
\newtheorem{prpstn}[thrm]{Proposition}
\newtheorem{prop}[thrm]{Proposition}
\title{Entropy and efficiency of the ETF market}
\author{Lucio Maria Calcagnile, Fulvio Corsi, Stefano Marmi}
\date{September 2016}
\begin{document}
\maketitle
\abstract{We investigate the relative information efficiency of financial markets by measuring the entropy of the time series of high frequency data. Our tool to measure efficiency is the Shannon entropy, applied to 2-symbol and 3-symbol discretisations of the data. Analysing 1-minute and 5-minute price time series of 55 Exchange Traded Funds traded at the New York Stock Exchange, we develop a methodology to isolate true inefficiencies from other sources of regularities, such as the intraday pattern, the volatility clustering and the microstructure effects. The first two are modelled as multiplicative factors, while the microstructure is modelled as an ARMA noise process. Following an analytical and empirical combined approach, we find a strong relationship between low entropy and high relative tick size and that volatility is responsible for the largest amount of regularity, averaging 62\% of the total regularity against 18\% of the intraday pattern regularity and 20\% of the microstructure.}

\tableofcontents

\section{Introduction}
The process of incorporating the information into prices does not occur instantaneously in real markets, giving rise to small inefficiencies, that are more present at the high frequency (intraday) level than at the low frequency (at most daily) level. Since inefficiencies are always present, rather than efficiency in absolute terms it is more interesting to study the notion of \emph{relative} efficiency, i.e., the degree of efficiency of a market measured against the benchmark of an idealised perfectly efficient market.

In this paper we investigate to what extent assets depart from the idealised perfect efficiency, ranking them according to relative efficiency, and to what degree the known sources of regularities (the intraday pattern, the volatility and the microstructure) contribute to the creation of inefficiencies. As tool to measure the randomness of the time series, we employ the Shannon entropy. Since it is defined for finite-alphabet symbolic sources, we work with 2-symbol and 3-symbol discretisations of the data.

In the 2-symbol discretisation (one symbol for the positive returns, the other for the negative returns) the intraday pattern and the volatility have no effect, since they are modelled as multiplicative factors. Only the microstructure effects affect the symbolisation. Market microstructure gives rise to linear autocorrelation that have been modelled in the literature as autoregressive moving average (ARMA) processes. By means of an analytical study on the entropy of AR(1) and MA(1) processes, and a more empirical study on ARMA residuals, we develop two methodologies to assess the inefficiency beyond the microstructure effects. Relying on the ARMA modelling of asset returns, we follow two ways. For the series that are best fitted with an AR(1), an MA(1) or a simple white noise process, we define a measure of inefficiency as the (normalised) difference between the Shannon entropy measured on the data and the theoretical value of the Shannon entropy of the corresponding process. For the other series, which are best fitted with ARMA$(p,q)$ models with $p + q > 1$, we define a measure of inefficiency as the (normalised) difference between 1 (the entropy of a white noise process) and the Shannon entropy of the ARMA residuals.

In the 3-symbol discretisation, one symbol represents a stability basin, encoding all returns in a neighbourhood of zero, while negative and positive returns lying outside of this basin are encoded with the other two symbols. Previous works dealing with a 3-symbol discretisation of data fix absolute thresholds to define this basin. We argue that there are numerous problems in doing so and, as a major enhancement with respect to such literature, we propose a very flexible approach to the 3-symbol discretisation, which is also rather general. We define the thresholds for the symbolisation to be the two tertiles of the distribution of values taken by the time series. Such a definition has many advantages, since, unlike a fixed-threshold symbolisation scheme, it adapts to the distribution of the time series. This is important because different assets have different distributions of returns and fixed thresholds could introduce discrepancies in treating the different time series. Moreover, the distribution of returns also varies with the sampling frequency, so that a fixed symbolisation scheme for different frequencies appears inappropriate. Finally, our flexible tertile symbolisation can be applied not only to the raw return series, but also to series of processed returns whose values range on different scales, such as the volatility-standardised returns and the ARMA residuals. Using the tertile symbolisation, we investigate to what degree the intraday pattern, the volatility and the microstructure contribute to create regularities in the return time series. We follow a whitening procedure, starting from the price returns, removing the intraday pattern, then standardising by the volatility, finally filtering the standardised returns for the ARMA structure and getting the ARMA residuals. We symbolise all these series with the dynamic tertile thresholds and estimate the Shannon entropy of the symbolis series to measure their degree of randomness.

In the literature only few papers have studied the relative efficiency of financial markets and ranked assets according to the efficiency degree. In \cite{Cajueiro_Tabak:2004, Giglio_etal:2008, Shmilovici_etal:2003, Shmilovici_etal:2009, Risso:2009, Oh_Kim_Eom:2007} different tools were used to measure it as distance to perfect randomness, such as the Hurst exponent, algorithmic complexity theory, a variable order Markov model, Shannon entropy and Approximate Entropy. The great majority of this literature analyses daily data (an exception is \cite{Shmilovici_etal:2009}), while our study is focused on intraday high frequency data and their peculiarities. In \cite{Giglio_etal:2008, Shmilovici_etal:2003, Shmilovici_etal:2009} return data are symbolised with a three-symbol discretisation with absolute thresholds, which in our opinion introduces some redundancy due to the long-memory properties of the volatility. A ternary discretisation with fixed thresholds incorporates some predictability in periods of high or low volatility, even under the null assumption of no correlation among the returns. When we deal with ternary discretisations of high frequency returns in Section \ref{sec:ternary_alphabet}, we will pay great attention to the removal of intraday patterns and long-term volatility.

The paper is organised as follows. In Section \ref{sec:Shannon_entropy} we introduce the Shannon entropy and present the theoretical study on the entropy of the AR(1) and MA(1) processes, in Section \ref{sec:data} we present the data, Section \ref{sec:binary_alphabet} presents the analyses performed with the binary symbolisation of the data, Section \ref{sec:ternary_alphabet} reports on the analyses done with the ternary symbolisations and Section \ref{sec:conclusions} concludes. Appendix \ref{sec:entropy_estimator} details on the entropy estimator used, Appendix \ref{sec:app_AR1_MA1_entropy} contains further details on the theoretical entropy of the AR(1) and MA(1) processes and the proofs of the propositions stated in Section \ref{sec:AR1_MA1_entropy} and Appendix \ref{sec:appendix_data_cleaning} reports on the data cleaning procedures.

\section{Shannon entropy of high frequency data} \label{sec:Shannon_entropy}

\subsection{Shannon entropy}
In information theory, a finite-alphabet stationary information source is a sequence of random variables which take values in a finite set $A$ (the \emph{alphabet} of the source), such that the probability
\begin{equation} \label{eq:string_measure}
	\mu (X_t = a_1, \ldots, X_{t+k-1} = a_k)
\end{equation}
of receiving a given string $a_1 \ldots a_k$ is well defined for all positive integers $k$ and independent of $t$. The measures (\ref{eq:string_measure}), for all $k$ and for all possible values of $a_1, \ldots, a_k \in A^k$, completely define the source and it is thus legitimate to identify the source with its probability measure $\mu$. The \emph{Shannon entropy} of a finite-alphabet information source is a measure of the uncertainty associated with the source's outputs. If at each time instant there is high uncertainty about what the source's output will be, the Shannon entropy of the source is high. If, conversely, there is little uncertainty, maybe because the source has some regular structures that occur repeatedly, then the Shannon entropy is low. Here is the formal definition.

\begin{dfntn}[Shannon entropy of an information source] 
Let $X = \{ X_1,X_2,\ldots \}$ be a stationary random process with finite alphabet $A$ and measure $\mu$. The \emph{$k$-th order entropy} of $X$ is
\begin{equation} \label{eq:blockentropy}
	H_k (\mu) = H (X_1^k) = - \sum_{x_1^k \in A^k} \mu (x_1^k) \log_2 \mu (x_1^k) .
\end{equation}
The \emph{$k$-th order conditional entropy} of $X$ is
\begin{equation} \label{eq:conditionalentropy}
	h_k (\mu) = H_k (\mu) - H_{k-1} (\mu) = - \sum_{x_1^k \in A^k} \mu (x_1^k) \log_2 \mu (x_k | x_1^{k-1}) .
\end{equation}
The \emph{entropy rate} or \emph{process entropy} of $X$ is
\begin{equation} \label{eq:entropyrate}
h (\mu) = H (\{ X_k \}) = \lim_{k \to \infty} \frac{H_k (\mu)}{k} = \lim_{k \to \infty} h_k (\mu) .
\end{equation}
\end{dfntn}
In the above definitions, the convention $0 \log_2 0 = 0$ is used. For the proof that the limits in Equation (\ref{eq:entropyrate}) exist and are equal, we refer the reader to any textbook treating process entropy.

The entropy $H_k$ represents the average uncertainty on output blocks of length $k$, while the conditional entropy $h_k$ gives a measure of the average uncertainty on a single output, once we know the most recent $k-1$ outputs. Finally, the entropy rate gives an average uncertainty on a single output, in the limit when the probability structure on longer and longer sequences is taken into account. It is thus clear why Shannon entropies are measures of departure from randomness. More random means less predictable, hence higher uncertainty and higher entropy. Conversely, a less random probability structure means higher levels of predictability and less uncertainty, thus lower entropy.

When the measure of the source is not analytically known, it must be inferred from the observed frequencies of the source's outputs. The straightforward and most natural way to estimate the measures of finite strings is to count the relative number of their occurrences in a long enough output sequence, supposed to be representative of the source's measure (which, in the limit for the length of the sequence going to infinity, is true almost surely). Suppose we have a long sample sequence $x_1^n$ generated by the information source. The empirical non-overlapping $k$-block distribution, with $k \ll n$, is defined by
\begin{equation*}
	q_k (a_1^k | x_1^n) = \frac{| \{ i \in \{ 0, 1, \ldots, m-1 \} : x_{ik+1}^{ik+k} = a_1^k \} |}{m} ,
\end{equation*}
where $m = \lfloor \frac{n}{k} \rfloor$. With this empirical measure, the entropy of the source can in principle be estimated by
\begin{equation*}
	\hat{H}_k^{\textrm{naive}} = - \sum_{a_1^k \in A^k} q_k (a_1^k | x_1^n) \log_2 q_k (a_1^k | x_1^n) .
\end{equation*}
However, when the sample $x_1^n$ is not long enough with respect to the length of the blocks $k$, this estimator is strongly biased and it systematically underestimates the value of the entropy. Better estimators have been proposed in the literature, that partially correct this bias. For all the entropy estimates $\hat{H}_k$ in this paper we use an estimator proposed by Grassberger (see Appendix \ref{sec:entropy_estimator} for the definition of the estimator and \cite{Grassberger:2008} for a detailed discussion on the entropy estimation problem and the formal derivation of the estimator (\ref{eq:Grassberger_estimator})).

Throughout this paper we estimate entropies by using the empirical distributions of finite strings in symbolic samples. In particular, we are concerned with the estimation of the entropies $H_k$ and $h_k$, with $k = 1, 2, \ldots, \log_2 n$, where $n$ is the length of the series. For values of $k$ greater than $\log_2 n$ the statistics provided by the series is too poor and the entropies $H_k$ are underestimated. For the sake of uniformity in presenting the results, we actually choose a few values for the order $k$, typically 2, 3, 6, 10.

As we shall see in Section \ref{sec:binary_alphabet} (Table \ref{tab:n_symbols_2s}), there will be cases where the different symbols of the alphabet appear in the series with significantly different frequencies. Since what we want to measure are the correlations among consecutive symbols, we want to filter out the difference in the frequencies of single symbols. Put another way, we want to measure the degree of randomness of the series, given that the symbols of the alphabet appear in the series with the observed frequencies. To this aim, the entropies we shall calculate are
\begin{equation} \label{eq:entropie_riscalate}
	\tilde{H}_k = \frac{\hat{H}_k}{\hat{H}_1} \quad \text{and} \quad \tilde{h}_k = \frac{\hat{h}_k}{\hat{H}_1} .
\end{equation}

Since the Shannon entropy deals with finite-alphabet information sources, a symbolisation of the returns is needed before being able to perform any analysis of entropy estimation. To be precise, we must say that of course return values are already discrete, since prices move on a discrete grid. However, what we intend to study by means of the Shannon entropy is the degree of randomness in the time sequence of \emph{few} coarsely identified behaviours of the price. Indeed, we shall be interested only in symbolisations into 2 or 3 symbols, each representing a notable behaviour, such as ``the price goes up'', ``the price is stationary'', ``the price goes down''.

\subsection{Modelling high frequency data} \label{sec:modelling_hfd}
Intraday return series generally show significant departure from perfect randomness and it is a stylised fact (see, for example, \cite{Taylor:2011}) that intraday returns possess some significant correlation, at least at the first lag. Two sources of this correlation are price discreteness and the bid-ask bounce in transaction prices, which shows most clearly at higher frequencies. The discrete and bouncing prices are responsible for a negative autocorrelation at the first lag.

A possible model to theoretically explain this stylised fact is the following. Market (logarithmic) prices $p_t$ are supposed to differ from the latent efficient prices $p_t^\ast$ by pricing errors $u_t$, so that it holds
\begin{equation} \label{eq:observed_price}
p_t = p_t^\ast + u_t .
\end{equation}
The market returns
\begin{equation} \label{eq:observed_return}
r_t = r_t^\ast + u_t - u_{t-1}
\end{equation}
are therefore the sum of two terms. The first is represented by the rational returns $r_t^\ast$, which are the rational response to fundamental information and are assumed to be white noise from the theory of efficient markets. Let $\sigma^2$ indicate their variance, that is, assume that $r_t^\ast \stackrel{\text{i.i.d.}}{\sim} (0,\sigma^2)$. The second component is $u_t - u_{t-1}$. Imposing different structures on $u_t$, many structural models for the microstructure effects can be recovered. In the simplest case, $\{ u_t \}$ is an i.i.d.~noise process independent of the price process. Let $\eta^2$ indicate the variance of random variables $u_t$. The observed returns process is then MA(1) with $\bbE [r_t] = 0$ and autocovariance function given by
\begin{equation*}
	\bbE [r_t r_{t-\tau}] = \left \{
	\begin{array}{ll}
		\sigma^2 + 2 \eta^2 & \quad \text{for } \tau = 0\\
		- \eta^2            & \quad \text{for } \tau = 1\\
		0                   & \quad \text{for } \tau \geq 2
	\end{array}
	\right . .
\end{equation*}
If the pricing errors $u_t$ are assumed to follow instead an AR(1) process, then the returns process is ARMA($1,1$), with a more complex autocorrelation structure.

Some empirical high frequency data indeed show a typical MA(1) structure in the autocorrelation of returns, although others do not. In many cases, returns exhibit significant autocorrelation also at lags greater than 1 (see Figure \ref{fig:autocorrelation_ETF_returns_1min}). A typical picture is one where the autocorrelation function shows an alternating sign, decreasing in absolute size as the lag gets larger (see the top right and bottom left panels of Figure \ref{fig:autocorrelation_ETF_returns_1min}; see also \cite{ait2011ultra}). In \cite{ait2011ultra}, the authors propose a simple model to capture this alternating-sign higher order dependence, which resembles an AR(1) model and is slightly more complicated. Motivated by this similarity and by the fact that the procedure in Section \ref{sec:modelling_return_series_as_ARMApq} identifies the AR(1) model as the best ARMA($p,q$) model for some return series, we regard the AR(1) model as a good compromise between effectiveness and simplicity.

For these two simple return models (the AR(1) and the MA(1)), we develop in Section \ref{sec:AR1_MA1_entropy} an analytical approach to determine the theoretical values of their Shannon entropies.

\subsection{The entropy of the processes AR(1) and MA(1)} \label{sec:AR1_MA1_entropy}
In order to talk about the Shannon entropy of the processes AR(1) and MA(1), whose phase space is continuous, we need some kind of discretisation. Among the infinitely many possible discretisations, we choose the simplest one which is not trivial. If $\{ X_t \}_t$ is an AR(1) or an MA(1) process, we define the binary symbolisation
\begin{equation} \label{eq:binary_symbolisation}
	s_t = B (X_t) = \left \{
		\begin{array}{ll}
			0 & \quad \textrm{if } X_t < 0\\
			1 & \quad \textrm{if } X_t > 0
		\end{array} \right .
	.
\end{equation}
The symbolisation (\ref{eq:binary_symbolisation}) is almost always defined, since the case $X_t = 0$ has obviously measure zero. In probabilistic terms, there would not be any difference between the given definition and one where the equality to zero is assigned to either symbol 0 or 1. The symbolisation (\ref{eq:binary_symbolisation}) thus defines a binary process $\{ s_t \}_t$, which will be the object we shall be studying throughout this section. This finite-state process has a measure $\mu$ inherited from and depending on the original process $X_t$. When we want to specify to which process we are referring to, we shall use the notations $H_k^{AR(1)}$, $H_k^{MA(1)}$, $h_k^{AR(1)}$, $h_k^{MA(1)}$, $h^{AR(1)}$, $h^{MA(1)}$.

To calculate the Shannon entropies (\ref{eq:blockentropy}), (\ref{eq:conditionalentropy}) and (\ref{eq:entropyrate}) of the discretised AR(1) and MA(1) processes, we exploit some properties of symmetry that they possess. We formalise these properties by proving a number of results, whose statements we report in the text of this section. Their proofs, together with a more technical part about a geometric characterisation of the entropies of the AR(1) process, are reported in Appendix \ref{sec:app_AR1_MA1_entropy}. We start with a result about the parity of the entropies as functions of the autoregressive parameter $\phi$ and the moving average parameter $\theta$.
\begin{prop} \label{prop:parity_h}
The entropy $H_k$ is an even function of the parameter $\phi$ or $\theta$, for all $k = 1, 2, \ldots$. Moreover, also $h_k$, for all $k = 1, 2, \ldots$, and $h$ are even functions. In formulas, it holds
\begin{itemize}
\item[(i)]   $H_k^{AR(1)} (\phi)   = H_k^{AR(1)} (- \phi)   ,$
\item[(ii)]  $H_k^{MA(1)} (\theta) = H_k^{MA(1)} (- \theta) ,$
\item[(iii)] $h_k^{AR(1)} (\phi)   = h_k^{AR(1)} (- \phi)   ,$
\item[(iv)]  $h_k^{MA(1)} (\theta) = h_k^{MA(1)} (- \theta) ,$
\end{itemize}
for all $k = 1, 2, \ldots$, and
\begin{itemize}
\item[(v)]   $h^{AR(1)}   (\phi)   = h^{AR(1)}   (- \phi)   ,$
\item[(vi)]  $h^{MA(1)}   (\theta) = h^{MA(1)}   (- \theta) .$
\end{itemize}
\end{prop}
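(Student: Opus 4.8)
The plan is to exploit a single elementary symmetry: multiplying the underlying process by the alternating sign $(-1)^t$. Concretely, given an AR(1) or MA(1) process $\{ X_t \}$, I would set $Y_t = (-1)^t X_t$ and show that $\{ Y_t \}$ is again a process of the same type but with the sign of the defining parameter reversed. The binary symbolisations of $\{ X_t \}$ and $\{ Y_t \}$ then differ only by a deterministic, position-periodic relabelling of the symbols, and since the Shannon entropy of a block distribution is invariant under any relabelling of the blocks, the block entropies of the two processes coincide. This immediately gives the evenness of $H_k$, from which the evenness of $h_k$ and $h$ follows formally.

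First I would verify the transformation claims. For the AR(1) case $X_t = \phi X_{t-1} + \varepsilon_t$ with i.i.d.\ innovations $\varepsilon_t$ having a symmetric distribution, writing $\xi_t = (-1)^t \varepsilon_t$ gives $Y_t = (-1)^t X_t = -\phi\, Y_{t-1} + \xi_t$. Because the innovation law is symmetric, $\{ \xi_t \}$ is again i.i.d.\ with the same distribution, so $\{ Y_t \}$ is the stationary AR(1) process with parameter $-\phi$. For the MA(1) case $X_t = \varepsilon_t + \theta \varepsilon_{t-1}$ the same substitution gives $Y_t = \xi_t - \theta \xi_{t-1}$, i.e.\ the MA(1) process with parameter $-\theta$. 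In both cases $\{ Y_t \}$ is stationary, so its block entropies are well defined and equal to $H_k^{AR(1)} (-\phi)$, respectively $H_k^{MA(1)} (-\theta)$.

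Next I would compare the symbolisations. Since $Y_t = (-1)^t X_t$, we have $B(Y_t) = B(X_t) = s_t$ when $t$ is even and $B(Y_t) = B(-X_t) = 1 - s_t$ when $t$ is odd. Hence a length-$k$ symbol block of $\{ B(Y_t) \}$ is obtained from the corresponding block of $\{ s_t \}$ by flipping the entries in the odd (or even) positions, a fixed bijection $\pi$ of the finite set $\{ 0,1 \}^k$. Consequently the block measure of $\{ Y_t \}$ satisfies $\mu_Y = \mu_X \circ \pi^{-1}$, so the two measures carry the same multiset of values and $H_k (\mu_Y) = H_k (\mu_X)$. This establishes (i) and (ii). Items (iii) and (iv) then follow from $h_k = H_k - H_{k-1}$, and (v), (vi) follow by letting $k \to \infty$ in the entropy rate.

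The main point requiring care is the position-dependence of the relabelling: the map sending $\{ s_t \}$ to $\{ B(Y_t) \}$ flips symbols at odd offsets for blocks starting at an even index and at even offsets for blocks starting at an odd index. Stationarity of $\{ Y_t \}$ guarantees that $\mu_Y$ is nonetheless well defined, and both relabellings are bijections of $\{ 0,1 \}^k$, so either choice gives $H_k (\mu_Y) = H_k (\mu_X)$; their consistency is exactly the global symbol-flip invariance $\mu_X (b) = \mu_X (\bar{b})$, which holds because symmetric innovations make $\{ -X_t \}$ equal in law to $\{ X_t \}$. The only genuine hypothesis used is that the innovation distribution is symmetric about zero (as for Gaussian white noise); without it $\{ \xi_t \}$ would not share the law of $\{ \varepsilon_t \}$ and the argument would break.
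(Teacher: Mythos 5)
Your proof is correct and follows essentially the same route as the paper: the alternating-sign transformation $X_t \mapsto (-1)^t X_t$ maps the AR(1) (resp.\ MA(1)) process with parameter $\phi$ (resp.\ $\theta$) to the one with parameter $-\phi$ (resp.\ $-\theta$), inducing a measure-preserving bijection on symbol blocks and hence equality of the entropies, with (iii)--(vi) following formally from (i)--(ii). Your explicit treatment of the position-dependence of the symbol relabelling and its consistency via the global flip invariance $\mu_X(b)=\mu_X(\bar b)$ is a small point the paper leaves implicit, but the argument is the same.
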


Proposition \ref{prop:parity_h} tells us that it suffices to calculate the entropies of the AR(1) and the MA(1) processes only for $\phi \geq 0$ and for $\theta \geq 0$. We now state a result which derives from the symmetry of the normal distribution.
\begin{prop} \label{prop:complement_string}
Let $\mu$ be the measure of an AR(1) or an MA(1) process discretised as in (\ref{eq:binary_symbolisation}). Let $s_1^k \in \{ 0,1 \}^k$ be a binary string of length $k$ and $\bar{s}_1^k$ its \emph{complementary} string defined by $\bar{s}_i = 1 - s_i$, for each $i = 1, \ldots, k$. Then it holds $\mu (s) = \mu (\bar{s})$.
\end{prop}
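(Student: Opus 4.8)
The plan is to exploit the central symmetry of the finite-dimensional laws of the driving process. For both the AR(1) and the MA(1) model, built from an i.i.d.\ innovation sequence whose common law is symmetric about $0$ (in particular Gaussian, as the hint in the statement suggests), the vector $\bfX = (X_1, \ldots, X_k)$ has a law invariant under the reflection $\bfx \mapsto -\bfx$. I would first record this symmetry, then observe that complementing a symbol string amounts to reflecting the corresponding event through the origin, and finally combine the two.

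For the symmetry step, I would note that in either model $(X_1, \ldots, X_k)$ is a fixed linear image of the innovations, hence jointly Gaussian with zero mean; its density $f(\bfx) = (2\pi)^{-k/2} |\Sigma|^{-1/2} \exp(-\tfrac12 \bfx^{\mathsf T} \Sigma^{-1} \bfx)$ satisfies $f(-\bfx) = f(\bfx)$ because the quadratic form is unchanged under sign flip. Equivalently, and without invoking Gaussianity explicitly, if $\{\epsilon_t\}$ is i.i.d.\ with a law symmetric about $0$ then $\{-\epsilon_t\}$ has the same law; since $-X_t = \phi(-X_{t-1}) + (-\epsilon_t)$ in the AR(1) case and $-X_t = (-\epsilon_t) + \theta(-\epsilon_{t-1})$ in the MA(1) case, the process $\{-X_t\}$ obeys the same recursion driven by innovations of the same law, whence $-\bfX$ and $\bfX$ are equal in distribution.

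For the translation step, I would use that for almost every $x$ the symbolisation (\ref{eq:binary_symbolisation}) satisfies $B(-x) = 1 - B(x)$, since negating $x$ interchanges the two half-lines. Writing the cylinder event for the string $s_1^k$ as $R_s = \{ \bfx \in \bbR^k : B(x_i) = s_i,\ i = 1, \ldots, k \} = \prod_{i=1}^k I_{s_i}$ with $I_0 = (-\infty, 0)$ and $I_1 = (0, \infty)$, the relation $B(-x) = 1 - B(x)$ gives $I_{\bar s_i} = -I_{s_i}$ and hence $R_{\bar s} = -R_s$. Combining the two steps then yields
\begin{equation*}
\mu(\bar s) = \bbP(\bfX \in R_{\bar s}) = \bbP(\bfX \in -R_s) = \bbP(-\bfX \in R_s) = \bbP(\bfX \in R_s) = \mu(s),
\end{equation*}
where the fourth equality is exactly the reflection invariance established above.

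I do not expect a genuine obstacle here: the statement is in essence a restatement of the centred symmetry of the driving noise. The only points needing a little care are checking that symbol complementation corresponds precisely to reflecting the defining orthant of $\bbR^k$ through the origin, and the harmless remark that the boundary set $\{ \bfx : x_i = 0 \text{ for some } i \}$ has Lebesgue (hence Gaussian) measure zero, so that the open/closed convention in (\ref{eq:binary_symbolisation}) is immaterial.
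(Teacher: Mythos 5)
Your proof is correct and follows essentially the same route as the paper: both arguments rest on the observation that negating the innovations $\{\epsilon_t\}$ leaves their law unchanged while turning $\{X_t\}$ into $\{-X_t\}$, so the sign-flipped process is equal in distribution to the original and the discretised string $s$ maps to its complement $\bar{s}$. You merely spell out more explicitly the translation between symbol complementation and reflection of the defining orthant, which the paper leaves implicit.
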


Finally, the last property that we need is the time-reversibility of stationary Gaussian linear models, which is the content of the next theorem. We first give the formal definition of time-reversibility.
\begin{dfntn}
A stationary process is time-reversible if, for every $n$ and every $t_1,\ldots,t_n$, the vectors $\{ X_{t_1}, \ldots, X_{t_n} \}$ and $\{ X_{t_n}, \ldots, X_{t_1} \}$ have the same joint probability distribution.
\end{dfntn}
\begin{thrm}
Stationary ARMA processes built from a Gaussian white noise are time-reversible.
\end{thrm}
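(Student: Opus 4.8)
The plan is to exploit the fact that an ARMA process driven by Gaussian white noise is a \emph{stationary Gaussian process}, and that the law of a Gaussian process is completely determined by its mean function and its autocovariance function. Time-reversibility is a statement about finite-dimensional distributions, so it suffices to verify it at the level of these two objects. Writing $\gamma(\tau) = \bbE [X_t X_{t+\tau}]$ for the autocovariance (the mean being the constant $0$), I would reduce the whole theorem to the single algebraic fact that $\gamma$ is an \emph{even} function of the lag, namely $\gamma(-\tau) = \bbE[X_t X_{t-\tau}] = \bbE[X_{t+\tau} X_t] = \gamma(\tau)$, which holds for any stationary process.

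First I would establish joint Gaussianity. Under the usual causality/stationarity condition (the roots of the autoregressive polynomial lying outside the unit circle), the process admits a convergent infinite moving-average representation $X_t = \sum_{j \geq 0} \psi_j \, \varepsilon_{t-j}$ with $\sum_j \psi_j^2 < \infty$, where $\{ \varepsilon_t \}$ is the Gaussian white noise. Any finite vector $(X_{t_1}, \ldots, X_{t_n})$ is then an $L^2$-limit of finite linear combinations of the jointly Gaussian variables $\varepsilon_t$, and since the class of Gaussian vectors is closed under $L^2$ limits, the vector $(X_{t_1}, \ldots, X_{t_n})$ is itself jointly Gaussian with mean $0$.

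Next, for a fixed choice of indices I would compare the covariance matrices of the forward vector $(X_{t_1}, \ldots, X_{t_n})$ and the reversed vector $(X_{t_n}, \ldots, X_{t_1})$. The $(i,j)$ entry of the former is $\gamma(t_i - t_j)$, while that of the latter is $\gamma(t_{n+1-i} - t_{n+1-j})$; by the evenness of $\gamma$ the second array is exactly the reflection of the first across the anti-diagonal, and for the equally spaced sampling relevant to the block entropies the two arrays coincide entry by entry. Equivalently, the reflected process $X_{-t}$ is stationary Gaussian with the \emph{same} autocovariance $\gamma$, hence has the same finite-dimensional laws as $X_t$. Since two Gaussian vectors with equal mean and equal covariance are identically distributed, the forward and reversed vectors have the same joint distribution, which is precisely time-reversibility.

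The covariance bookkeeping and the evenness of $\gamma$ are routine; the one step that genuinely requires care is the first, i.e. proving that the ARMA recursion turns Gaussian white noise into a \emph{jointly} Gaussian process. This rests on the convergence of the MA($\infty$) representation and on the closedness of Gaussianity under $L^2$ (equivalently, characteristic-function) limits, and it is here that the Gaussian hypothesis on the innovations is indispensable: for non-Gaussian innovations the autocovariance is still even, yet the higher-order moment structure generically breaks time-reversibility, so the evenness argument alone would not suffice.
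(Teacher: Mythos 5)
Your proof is correct, and it is worth noting that the paper does not actually prove this theorem at all: it simply cites Weiss (1975) and moves on. So you have supplied the missing argument, and you have supplied the right one. The route you take --- joint Gaussianity of the ARMA process via the MA($\infty$) representation and closure of the Gaussian family under $L^2$ limits, followed by the observation that a zero-mean stationary Gaussian process is determined by its autocovariance $\gamma$, which is automatically even --- is the standard direct proof of the implication ``Gaussian innovations $\Rightarrow$ time-reversible,'' which is the only direction the paper needs. (Weiss's paper proves the much harder converse as well: among stationary ARMA processes, essentially only the Gaussian ones are time-reversible; your closing remark that evenness of $\gamma$ alone cannot suffice for non-Gaussian innovations is exactly the content of that converse.) Two small points. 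First, for a general stationary ARMA process one only knows that the AR polynomial has no roots \emph{on} the unit circle, so the MA($\infty$) representation may be two-sided, $X_t = \sum_{j \in \bbZ} \psi_j \varepsilon_{t-j}$; this changes nothing in your joint-Gaussianity argument, but you should not restrict to the causal case. Second, you correctly noticed that for arbitrary, non-equally-spaced indices the covariance matrix of the reversed tuple is the anti-diagonal reflection $J \Sigma J$ of the forward one rather than $\Sigma$ itself; the clean way out, which you state, is to prove reversibility in the form $(X_{t_1},\ldots,X_{t_n}) \stackrel{d}{=} (X_{-t_1},\ldots,X_{-t_n})$ (the reflected process has the same autocovariance, hence the same law), which for the equally spaced times used in Corollary \ref{crllr:time-reversibility} gives exactly the identity $\mu(s_1 \ldots s_k) = \mu(s_k \ldots s_1)$ that the paper needs. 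The paper's own phrasing of the definition (reversing the order of an arbitrary tuple) is, read literally, slightly stronger than the standard notion and only coincides with it for arithmetic progressions of sampling times, so your restriction to that case is not a defect but a necessary precaution.
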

For the proof see \cite{Weiss:1975}. What we are interested in is the following specification to the AR(1) and MA(1) cases.
\begin{crllr} \label{crllr:time-reversibility}
Let $\mu$ be the measure of an AR(1) or an MA(1) process discretised as in (\ref{eq:binary_symbolisation}). Then for every binary string $s_1 \ldots s_k \in \{ 0,1 \}^k$ it holds $\mu (s_1 \ldots s_k) = \mu (s_k \ldots s_1)$.
\end{crllr}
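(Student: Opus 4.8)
The plan is to reduce the statement to the time-reversibility theorem by expressing each cylinder probability $\mu(s_1 \ldots s_k)$ of the symbolic process $\{ s_t \}$ as the probability of a rectangular event for the underlying continuous process $\{ X_t \}$, and then transporting that event through the reversibility of $(X_1, \ldots, X_k)$.

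First I would observe that an AR(1) and an MA(1) process driven by Gaussian white noise are particular cases of stationary ARMA processes built from a Gaussian white noise, so the time-reversibility theorem applies to $\{ X_t \}$ directly. Taking $n = k$ and $t_i = i$, it gives the distributional identity $(X_1, \ldots, X_k) \stackrel{d}{=} (X_k, X_{k-1}, \ldots, X_1)$; equivalently, for every Borel set $B \subseteq \bbR^k$ one has $\bbP \left( (X_1, \ldots, X_k) \in B \right) = \bbP \left( (X_k, \ldots, X_1) \in B \right)$.

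Next I would write the cylinder probability as a probability of a product set. Setting $I_0 = (-\infty, 0)$ and $I_1 = (0, \infty)$, the symbolisation (\ref{eq:binary_symbolisation}) gives $\mu (s_1 \ldots s_k) = \bbP \left( X_i \in I_{s_i}, \ i = 1, \ldots, k \right) = \bbP \left( (X_1, \ldots, X_k) \in I_{s_1} \times \cdots \times I_{s_k} \right)$, where the event $\{ X_t = 0 \}$ may be freely ignored, having measure zero (as already noted after (\ref{eq:binary_symbolisation})). Applying the reversibility identity to the Borel rectangle $B = I_{s_1} \times \cdots \times I_{s_k}$ yields $\mu (s_1 \ldots s_k) = \bbP \left( X_k \in I_{s_1}, X_{k-1} \in I_{s_2}, \ldots, X_1 \in I_{s_k} \right)$. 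Re-reading this last event along the natural time order $X_1, \ldots, X_k$ shows that it imposes $X_j \in I_{s_{k+1-j}}$, i.e. the symbol read at position $j$ equals $s_{k+1-j}$; hence it is exactly the cylinder of the reversed string, with probability $\mu (s_k \ldots s_1)$. This closes the argument.

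The proof carries no analytic content beyond the cited theorem; the only points requiring care are the harmless measure-zero boundary $\{ X_t = 0 \}$ and the index bookkeeping in the reversal, which I would spell out explicitly so that the rectangle $I_{s_1} \times \cdots \times I_{s_k}$ gets matched to the reversed string $s_k \ldots s_1$ rather than to $s_1 \ldots s_k$ itself.
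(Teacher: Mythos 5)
Your proof is correct and follows exactly the route the paper intends: the corollary is stated as an immediate specification of the Weiss time-reversibility theorem, applied to the rectangle $I_{s_1} \times \cdots \times I_{s_k}$ determined by the sign constraints of the symbolisation (\ref{eq:binary_symbolisation}). The paper leaves this argument implicit, so your explicit handling of the measure-zero boundary and the index reversal is a faithful (and slightly more careful) rendering of the same proof.
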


At least for the AR(1) process, the characterisation given in Section \ref{sec:geometric_characterisation} of the appendix is very general. Though it can be exploited to calculate the entropies through the calculation of the solid angles for $k = 2, 3$, there seem to exist no general formula for calculating the solid angles in $\bbR^k$ determined by $k$ hyperplanes, for $k \geq 4$. We now find the entropies $H_k$, for $k = 1, 2, 3$, both for the AR(1) and the MA(1) processes. Thanks to Proposition \ref{prop:parity_h}, we can restrict our attention to values of the autoregressive parameter $\phi \geq 0$ and the moving average parameter $\theta \geq 0$.
\\

$\mathbf{k = 1}$\quad
When $k = 1$ we must find the measures $\mu (0)$ and $\mu (1)$. Since the random variables of the process (either the AR(1) or the MA(1)) have a symmetrical distribution we simply have $\mu (0) = \mu (1) = \frac{1}{2}$. Therefore, for the 1st order entropy we have $H_1 = - \mu (0) \log_2 \mu (0) - \mu (1) \log_2 \mu (1) = 1$.
\\

$\mathbf{k = 2}$\quad
We deal with the case $k = 2$ by proving a general result, which is very useful also for the case $k = 3$. To establish some notation, if $\mathbf{a} = a_1^l$ and $\bfb = b_1^m$ are two finite binary strings, let us denote by $\mathbf{a} \cdot^i \bfb$ the cylinder set defined by $\{ S_1^l = a_1^l \} \cap \{ S_{l+i+1}^{l+i+m} = b_1^m \}$. We prove the following two propositions for the processes AR(1) and MA(1).
\begin{prpstn} \label{prop:mu(a_b)_AR(1)}
Let $\mu$ be the measure of the discretised AR(1) process. Then it holds
\begin{align}
	\mu (0 \cdot^i 0) & = \frac{1}{2 \pi} \arccos (- \phi^{i+1}) , \label{eq:mu00_AR(1)}\\
	\mu (0 \cdot^i 1) & = \frac{1}{2 \pi} \arccos (\phi^{i+1}) , \label{eq:mu01_AR(1)}
\end{align}
for $i \geq 0$.
\end{prpstn}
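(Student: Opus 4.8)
The plan is to reduce both identities to the computation of orthant probabilities for a centred bivariate Gaussian vector. By the definition of the cylinder notation $\mathbf{a} \cdot^i \bfb$, we have $\mu (0 \cdot^i 0) = \bbP (X_1 < 0, X_{i+2} < 0)$ and $\mu (0 \cdot^i 1) = \bbP (X_1 < 0, X_{i+2} > 0)$. Since the AR(1) process is a stationary Gaussian process, the pair $(X_1, X_{i+2})$ is jointly Gaussian with zero mean, so its law is completely determined by the single correlation coefficient $\rho = \mathrm{corr}(X_1, X_{i+2})$. First I would record the standard fact that the autocorrelation of a stationary AR(1) process at lag $\tau$ equals $\phi^\tau$; as the lag between positions $1$ and $i+2$ is $(i+2)-1 = i+1$, this gives $\rho = \phi^{i+1}$.

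The central step is the orthant-probability (Sheppard) formula
\begin{equation*}
	\bbP (X_1 < 0, X_{i+2} < 0) = \frac{1}{4} + \frac{1}{2 \pi} \arcsin \rho .
\end{equation*}
I would derive this geometrically rather than merely quote it, exploiting the isotropy of the standard Gaussian: writing the standardised pair as $X_1 = Z_1$ and $X_{i+2} = \rho Z_1 + \sqrt{1 - \rho^2}\, Z_2$ with $(Z_1, Z_2)$ i.i.d.\ standard normal, the event $\{ X_1 < 0, X_{i+2} < 0 \}$ becomes a planar sector with apex at the origin. Because the distribution of $(Z_1, Z_2)$ is rotationally invariant, its probability equals the opening angle of that sector divided by $2 \pi$; computing the angle between the two bounding lines $Z_1 = 0$ and $\rho Z_1 + \sqrt{1 - \rho^2}\, Z_2 = 0$ yields the stated expression. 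This is exactly the two-dimensional instance of the solid-angle characterisation alluded to in Section \ref{sec:geometric_characterisation}.

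Once the orthant probability is in hand, the two formulae follow by elementary trigonometric rewriting. For (\ref{eq:mu00_AR(1)}) I would use the identity $\arcsin x = \arccos(-x) - \pi/2$ to convert $\tfrac{1}{4} + \tfrac{1}{2\pi} \arcsin (\phi^{i+1})$ into $\tfrac{1}{2\pi} \arccos(-\phi^{i+1})$. For (\ref{eq:mu01_AR(1)}) I would observe that $\mu (0 \cdot^i 1) = \bbP (X_1 < 0) - \mu (0 \cdot^i 0) = \tfrac{1}{2} - \mu (0 \cdot^i 0)$ and simplify, using $\arccos(x) + \arccos(-x) = \pi$, to obtain $\tfrac{1}{2\pi} \arccos(\phi^{i+1})$. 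As a consistency check, for $\phi \geq 0$ the correlation $\rho = \phi^{i+1}$ is nonnegative, so $\mu (0 \cdot^i 0) \geq \tfrac{1}{4}$, which correctly reflects that positive correlation makes the joint negativity event more likely than under independence.

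The only genuinely non-routine ingredient is the orthant-probability formula; everything else is bookkeeping. I expect the main obstacle to be the careful computation of the sector angle (and the tracking of the sign of $\rho$) in the geometric derivation, so that the arcsine appears with the correct sign across the whole admissible range of the parameter. The trigonometric conversions and the complementation argument for $\mu (0 \cdot^i 1)$ are then immediate, and the restriction to $\phi \geq 0$ granted by Proposition \ref{prop:parity_h} means no separate treatment of negative $\phi$ is needed.
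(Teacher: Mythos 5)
Your proposal is correct and follows essentially the same route as the paper: the paper also reduces the two probabilities to the measure of a planar sector for a rotationally invariant bivariate Gaussian (via the conditional law $X_{i+2} \mid X_1 \sim \calN \bigl( \phi^{i+1} X_1, \tfrac{1-\phi^{2(i+1)}}{1-\phi^2} \sigma^2 \bigr)$ and the normalisation of Section \ref{sec:geometric_characterisation}), which is exactly your decomposition $X_{i+2} = \rho Z_1 + \sqrt{1-\rho^2}\, Z_2$ with $\rho = \phi^{i+1}$. Your detour through Sheppard's $\arcsin$ formula followed by conversion to $\arccos$ is only a cosmetic repackaging of the paper's direct computation of the angle between the two cutting lines.
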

\begin{prpstn} \label{prop:mu(a_b)_MA(1)}
Let $\mu$ be the measure of the discretised MA(1) process. Then it holds
\begin{align}
	\mu (00) & = \frac{1}{2 \pi} \arccos \Big(-\frac{\theta}{1+\theta^2} \Big) , \label{eq:mu00_MA(1)}\\
	\mu (01) & = \frac{1}{2 \pi} \arccos \Big( \frac{\theta}{1+\theta^2} \Big) , \label{eq:mu01_MA(1)}
\end{align}
and
\begin{equation} \label{eq:mu_s1.s2_MA(1)}
	\mu (\mathbf{s_1} \cdot^i \mathbf{s_2}) = \mu (\mathbf{s_1}) \mu (\mathbf{s_2}) ,
\end{equation}
for $i \geq 1$.
\end{prpstn}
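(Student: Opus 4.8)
The plan is to reduce everything to the standard orthant (quadrant) probability for a bivariate Gaussian, exploiting that an MA(1) process driven by Gaussian white noise is jointly Gaussian with a covariance that vanishes beyond lag $1$. Write $X_t = Z_t + \theta Z_{t-1}$ with $Z_t$ i.i.d.\ $N(0,\sigma_Z^2)$. A direct covariance computation gives $\operatorname{Var}(X_t) = (1+\theta^2)\sigma_Z^2$, $\operatorname{Cov}(X_t, X_{t+1}) = \theta\sigma_Z^2$, and $\operatorname{Cov}(X_t, X_{t+\tau}) = 0$ for $\tau \geq 2$. By stationarity the two marginals in any pair have equal variance, so the lag-$1$ correlation is $\rho_1 = \theta/(1+\theta^2)$ and all higher-lag correlations vanish.

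For the pair formulas I would invoke the bivariate-normal orthant identity: if $(U,V)$ is centred jointly Gaussian with equal variances and correlation $\rho$, then $\bbP(U>0,V>0) = \frac{1}{2\pi}\arccos(-\rho) = \tfrac14 + \tfrac{1}{2\pi}\arcsin\rho$ and $\bbP(U>0,V<0) = \frac{1}{2\pi}\arccos(\rho)$, the two being related through $\arccos(-\rho)=\pi-\arccos\rho$. This is exactly the computation already carried out for the AR(1) case in Proposition \ref{prop:mu(a_b)_AR(1)}, where the relevant correlation is $\phi^{i+1}$; here the relevant correlation is $\rho_1$. Since $\mu(00) = \bbP(X_t<0,X_{t+1}<0)$, which by the symmetry of the centred Gaussian (Proposition \ref{prop:complement_string}) equals $\bbP(X_t>0,X_{t+1}>0)$, applying the identity with $\rho=\rho_1$ yields $\mu(00) = \frac{1}{2\pi}\arccos\!\big(-\theta/(1+\theta^2)\big)$; similarly $\mu(01) = \frac{1}{2\pi}\arccos\!\big(\theta/(1+\theta^2)\big)$.

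For the factorisation (\ref{eq:mu_s1.s2_MA(1)}), let $\mathbf{s_1}$ have length $l$ and $\mathbf{s_2}$ length $m$. The block $\mathbf{s_1}$ is determined by $(X_1,\ldots,X_l)$ and the block $\mathbf{s_2}$ by $(X_{l+i+1},\ldots,X_{l+i+m})$. The smallest index gap between the two blocks is $(l+i+1)-l = i+1 \geq 2$ whenever $i \geq 1$, so every cross-covariance between a coordinate of the first block and a coordinate of the second block vanishes. Because the whole collection is jointly Gaussian, zero cross-covariance forces the two blocks to be independent random vectors; hence the events $\{S_1^l = \mathbf{s_1}\}$ and $\{S_{l+i+1}^{l+i+m} = \mathbf{s_2}\}$ are independent, giving $\mu(\mathbf{s_1}\cdot^i\mathbf{s_2}) = \mu(\mathbf{s_1})\mu(\mathbf{s_2})$.

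The only genuinely nontrivial ingredient is the orthant identity $\bbP(U>0,V>0)=\frac{1}{2\pi}\arccos(-\rho)$; everything else is bookkeeping with the MA(1) covariances. I would establish it by the usual whitening/solid-angle argument: after a linear change of variables making the Gaussian isotropic, the event $\{U>0,V>0\}$ becomes a planar cone whose opening angle is $\arccos(-\rho)$, and an isotropic Gaussian assigns a cone the probability equal to its angular fraction $\arccos(-\rho)/2\pi$. This is precisely the $k=2$ instance of the solid-angle characterisation that the appendix develops for the AR(1) process, so in practice I expect the MA(1) pair formulas to follow from that machinery with $\rho_1$ in place of $\phi$, the only new point being that for MA(1) the correlation is supported on a single lag, which is exactly what makes the factorisation hold for all $i\geq 1$.
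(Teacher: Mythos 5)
Your proposal is correct and follows essentially the same route as the paper: reduce the pair probabilities to the bivariate Gaussian orthant formula via the lag-$1$ correlation $\theta/(1+\theta^2)$ and the planar solid-angle (whitening) argument, and obtain the factorisation from the vanishing of all covariances at lags $\geq 2$. If anything, your treatment of the factorisation is slightly more explicit than the paper's (which only remarks that $X_t$ and $X_{t-i}$ are independent for $i\geq 2$), since you note that zero cross-covariance between the two blocks together with joint Gaussianity gives independence of the blocks as vectors, which is what the cylinder-set identity actually requires.
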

Note that these results suffice for calculating the entropies $H_2$, since we just take $i = 0$ in Proposition \ref{prop:mu(a_b)_AR(1)} and, furthermore, by Proposition \ref{prop:complement_string}, for both processes AR(1) and MA(1) we have $\mu (10) = \mu (01)$ and $\mu (11) = \mu (00)$.

For the AR(1) process we thus have
\begin{equation*}
	H_2^{AR(1)} (\mu) = - 2 \mu (00) \log_2 \mu (00) - 2 \mu (01) \log_2 \mu (01) ,
\end{equation*}
where $\mu (00) = \frac{1}{2 \pi} \arccos (- \phi)$ and $\mu (01) = \frac{1}{2 \pi} \arccos (\phi)$.

For the MA(1) process we have
\begin{equation*}
	H_2^{MA(1)} (\mu) = - 2 \mu (00) \log_2 \mu (00) - 2 \mu (01) \log_2 \mu (01) ,
\end{equation*}
with $\mu (00)$ and $\mu (01)$ given by Equations (\ref{eq:mu00_MA(1)}) and (\ref{eq:mu01_MA(1)}).
\\

$\mathbf{k = 3}$\quad
We could find the quantities $\mu (s_1 s_2 s_3)$, with $s_i \in \{ 0,1 \}$ for $i = 1,2,3$, by using the formula to calculate the solid angles in $\bbR^3$ cut by the hyperplanes of which we have the equations. However, it is much simpler and much more instructive to exploit the symmetry properties affirmed by Proposition \ref{prop:complement_string} and Corollary \ref{crllr:time-reversibility}.

Initially, we let $\mu$ indicate the measure of either process, AR(1) or MA(1). By Proposition \ref{prop:complement_string} we have $\mu (100) = \mu (011)$, $\mu (101) = \mu (010)$, $\mu (110) = \mu (001)$, $\mu (111) = \mu (000)$. Furthermore, by Corollary \ref{crllr:time-reversibility} we also have $\mu (001) = \mu (100)$. The symmetries thus reduce the number of unknown quantities from $2^3 = 8$ to three. Now note that we have the following three independent relations:
\begin{align} \label{eq:mu(abc)}
	\mu (000) + \mu (001) & = \mu (00) , \nonumber\\
	\mu (010) + \mu (011) & = \mu (01) , \\
	\mu (000) + \mu (010) & = \mu (0 \cdot 0) . \nonumber
\end{align}
Since $\mu (00)$, $\mu (01)$, $\mu (0 \cdot 0)$ are determined for the AR(1) and MA(1) processes in propositions \ref{prop:mu(a_b)_AR(1)} and \ref{prop:mu(a_b)_MA(1)}, we can solve the system (\ref{eq:mu(abc)}).

For the AR(1) process we finally have
\begin{equation*}
	H_3^{AR(1)} (\mu) = - 2 \mu (000) \log_2 \mu (000) - 4 \mu (001) \log_2 \mu (001) - 2 \mu (010) \log_2 \mu (010) ,
\end{equation*}
where $\mu (000) = \frac{1}{2 \pi} \arccos (- \phi) - \frac{1}{4 \pi} \arccos (\phi^2)$, $\mu (001) = \frac{1}{4 \pi} \arccos (\phi^2)$, $\mu (010) = \frac{1}{2 \pi} \arccos (\phi) - \frac{1}{4 \pi} \arccos (\phi^2)$.

For the MA(1) process we have
\begin{equation*}
	H_3^{MA(1)} (\mu) = - 2 \mu (000) \log_2 \mu (000) - 4 \mu (001) \log_2 \mu (001) - 2 \mu (010) \log_2 \mu (010) ,
\end{equation*}
where $\mu (000) = \frac{1}{2 \pi} \arccos (- \frac{\theta}{1+\theta^2}) - \frac{1}{8}$, $\mu (001) = \frac{1}{8}$, $\mu (010) = \frac{1}{2 \pi} \arccos (\frac{\theta}{1+\theta^2}) - \frac{1}{8}$.
\\

\begin{figure}[h]
\includegraphics[width=0.5\textwidth]{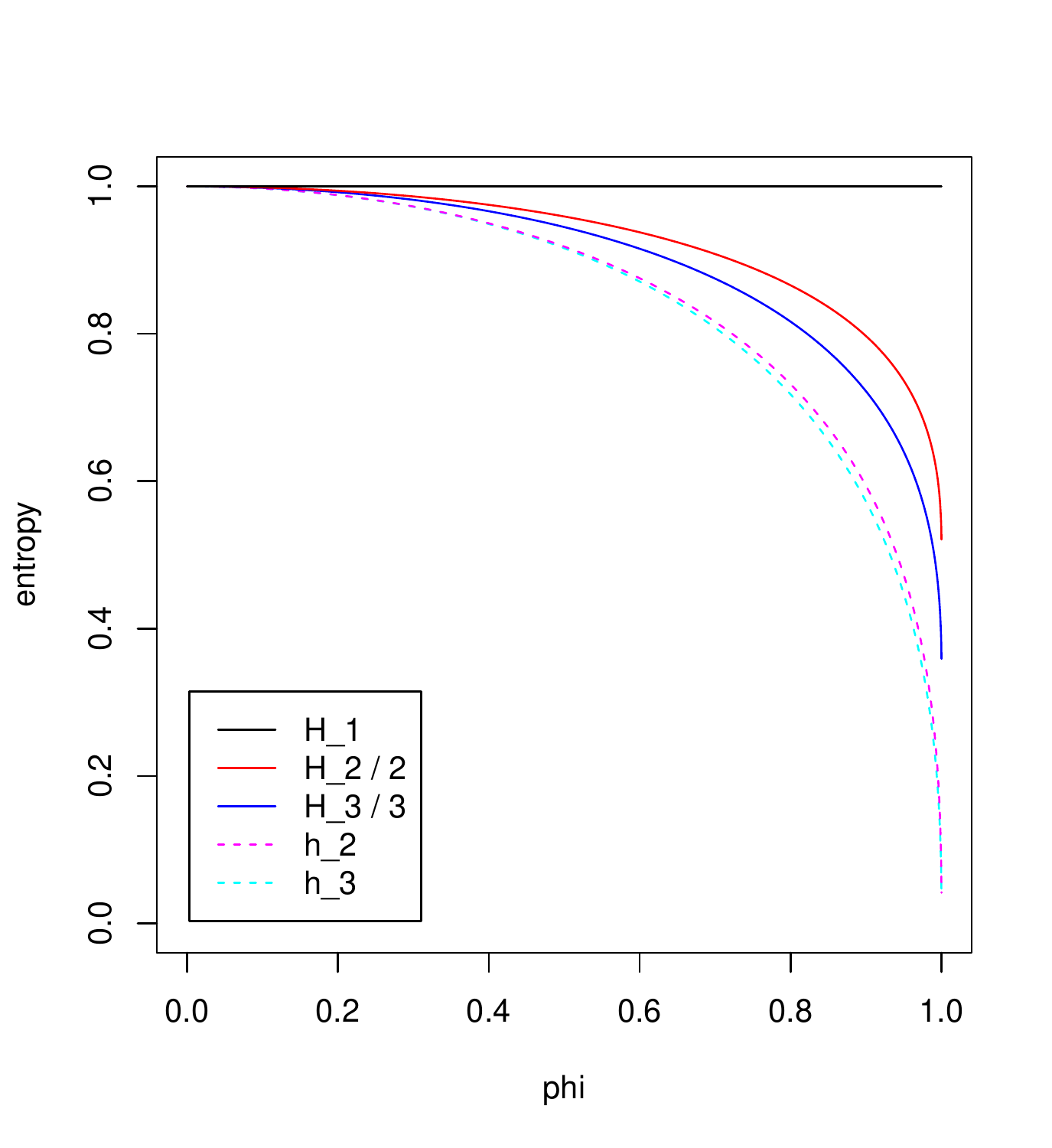}
\includegraphics[width=0.5\textwidth]{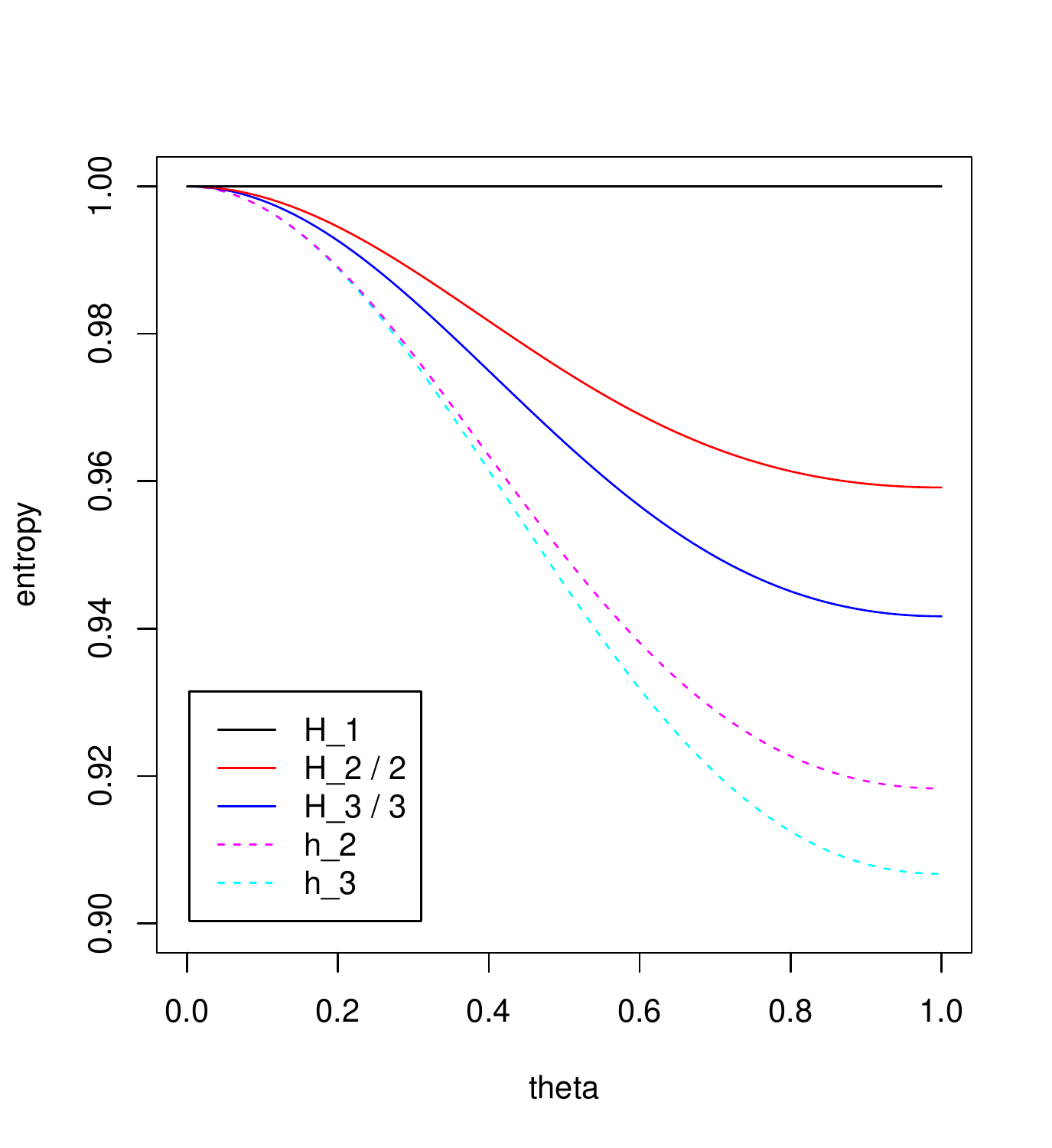}
\caption{Theoretical Shannon entropies $H_1$, $\frac{H_2}{2}$, $\frac{H_3}{3}$, together with conditional entropies $h_2 = H_2 - H_1$, $h_3 = H_3 - H_2$, of the AR(1) (left) and MA(1) (right) processes, as functions of the autoregressive parameter $0 \leq \phi < 1$ and of the moving average parameter $0 \leq \theta < 1$, respectively.}
\label{fig:entropie_teoriche}
\end{figure}

In Figure \ref{fig:entropie_teoriche} we graph the theoretical entropies $H_1$, $\frac{H_2}{2}$, $\frac{H_3}{3}$ and the conditional entropies $h_2 = H_2 - H_1$, $h_3 = H_3 - H_2$ as calculated above, for positive values of the autoregressive parameter $\phi$ and the moving average parameter $\theta$. It is interesting to note that in the AR(1) case the conditional entropies $h_2$ and $h_3$ go to 0 as $\phi$ approaches 1, while in the MA(1) process they converge to values of about 0.918 and 0.907, respectively.

\section{Data} \label{sec:data}

\subsection{The ETF dataset}
For our study we take high frequency historical data of 55 Exchange Traded Funds traded at the New York Stock Exchange.

An Exchange Traded Fund (ETF) is an investment whose performance is based on an index or other underlying assets. The goal of an ETF is to mimic its corresponding index and to yield the same return on investment. \emph{Inverse} and \emph{leveraged} ETFs aim at providing a return on investment which is the \emph{opposite} or a \emph{multiple} than the performance of the index. There also exist inverse leveraged ETFs.

In Table~\ref{tab:ETFs} we present the list of the ETFs studied, along with the asset tracked by each one.

\begin{table}
\centering
\begin{tiny}
\begin{tabular}{|p{0.7cm}|p{4cm}|p{1.7cm}|p{4.2cm}|c|}
\hline
ticker & name ETF & provider$^\text{a}$ & asset tracked & lev./inv.\\
\hline
DIA & DIAMONDS Trust Series 1 & St.~Str.~Gl.~Adv. & Dow Jones Industrial Average Index &\\
DXD & ProShares UltraShort Dow30 & ProShares & Dow Jones Industrial Average Index & -2x\\
EEM & iShares MSCI Emerging Markets Index Fund & iShares & MSCI Emerging Markets Index &\\
EFA & iShares MSCI EAFE Index & iShares & MSCI EAFE Index &\\
EWA & iShares MSCI Australia Index & iShares & MSCI Australia Index &\\
EWG & iShares MSCI Germany Index & iShares & MSCI Germany Index &\\
EWH & iShares MSCI Hong Kong Index & iShares & MSCI Hong Kong Index &\\
EWJ & iShares MSCI Japan Index & iShares & MSCI Japan Index &\\
EWM & iShares MSCI Malaysia Index & iShares & MSCI Malaysia Index &\\
EWS & iShares MSCI Singapore Index & iShares & MSCI Singapore Index &\\
EWT & iShares MSCI Taiwan Index & iShares & MSCI Taiwan Index &\\
EWU & iShares MSCI United Kingdom Index & iShares & MSCI United Kingdom Index &\\
EWW & iShares MSCI Mexico Investable Market Index & iShares & MSCI Mexico Investable Market Index &\\
EWY & iShares MSCI South Korea Index & iShares & MSCI South Korea Index &\\
EWZ & iShares MSCI Brazil Index & iShares & MSCI Brazil Index &\\
FXI & iShares FTSE China 25 Index & iShares & FTSE China 25 Index &\\
GDX & Market Vectors Gold Miners & Van Eck & AMEX Gold Miners Index &\\
GLD & SPDR Gold Shares & St.~Str.~Gl.~Mkts. & Gold Bullion &\\
IBB & iShares Nasdaq Biotechnology Index & iShares & NASDAQ Biotechnology Index &\\
ICF & iShares Cohen \& Steers Realty Majors Index & iShares & Cohen \& Steers Realty Majors Index &\\
IJH & iShares S\&P MidCap 400 Index & iShares & S\&P MidCap 400 Index &\\
IJR & iShares S\&P SmallCap 600 Index & iShares & S\&P SmallCap 600 Index &\\
IVE & iShares S\&P 500 Value Index & iShares & S\&P 500 Value Index &\\
IVV & iShares S\&P 500 Index & iShares & S\&P 500 Index &\\
IVW & iShares S\&P 500 Growth Index & iShares & S\&P 500 Growth Index &\\
IWD & iShares Russell 1000 Value Index & iShares & Russell 1000 Value Index &\\
IWF & iShares Russell 1000 Growth Index & iShares & Russell 1000 Growth Index &\\
IWM & iShares Russell 2000 Index & iShares & Russell 2000 Index &\\
IWN & iShares Russell 2000 Value Index & iShares & Russell 2000 Value Index &\\
IWO & iShares Russell 2000 Growth Index & iShares & Russell 2000 Growth Index &\\
IYR & iShares Dow Jones U.S. Real Estate Index & iShares & Dow Jones U.S. Real Estate Index &\\
MDY & SPDR S\&P MidCap 400 & St.~Str.~Gl.~Adv. & S\&P MidCap 400 Index &\\
MZZ & ProShares UltraShort MidCap400 & ProShares & S\&P MidCap 400 Index & -2x\\
PHO & PowerShares Water Resources & PowerShares & NASDAQ OMX US Water Index &\\
QID & ProShares UltraShort QQQ & ProShares & NASDAQ-100 Index & -2x\\
QLD & ProShares Ultra QQQ & ProShares & NASDAQ-100 Index & 2x\\
QQQQ & PowerShares QQQ & PowerShares & NASDAQ-100 Index &\\
RKH & Market Vectors Bank and Brokerage & Van Eck & Market Vectors US Listed Bank and Brokerage 25 Index &\\
RTH & Market Vectors Retail & Van Eck & Market Vectors US Listed Retail 25 Index &\\
SDS & ProShares UltraShort S\&P500 & ProShares & S\&P 500 Index & -2x\\
SLV & iShares Silver Trust & iShares & price of Silver &\\
SPY & SPDR S\&P 500 & St.~Str.~Gl.~Adv. & S\&P 500 Index &\\
SSO & ProShares Ultra S\&P500 & ProShares & S\&P 500 Index & 2x\\
TIP & iShares Barclays TIPS Bond & iShares & Barclays U.S. Treasury Inflation Protected Securities Index (Series-L) &\\
USO & United States Oil & United States Commodity Funds LLP & price of West Texas Intermediate light, sweet crude oil &\\
VWO & Vanguard MSCI Emerging Markets & Vanguard & MSCI Emerging Markets Index &\\
XHB & SPDR S\&P Home Builders & St.~Str.~Gl.~Adv. & S\&P Homebuilders Select Industry Index &\\
XLB & Materials Select Sector SPDR & St.~Str.~Gl.~Adv. & Materials Select Sector Index &\\
XLE & Energy Select Sector SPDR & St.~Str.~Gl.~Adv. & Energy Select Sector Index &\\
XLF & Financial Select Sector SPDR & St.~Str.~Gl.~Adv. & Financial Select Sector Index &\\
XLI & Industrial Select Sector SPDR & St.~Str.~Gl.~Adv. & Industrial Select Sector Index &\\
XLK & Technology Select Sector SPDR & St.~Str.~Gl.~Adv. & Technology Select Sector Index &\\
XLP & Consumer Staples Select Sector SPDR & St.~Str.~Gl.~Adv. & Consumer Staples Select Sector Index &\\
XLU & Utilities Select Sector SPDR & St.~Str.~Gl.~Adv. & Utilities Select Sector Index &\\
XLV & Health Care Select Sector SPDR & St.~Str.~Gl.~Adv. & Health Care Select Sector Index &\\
XLY & Consumer Discretionary Select Sector SPDR & St.~Str.~Gl.~Adv. & Consumer Discretionary Select Sector Index &\\
\hline
\end{tabular}

\caption{List of ETFs, with provider, tracked asset and possible leverage or inverse feature. A ``2x'' leveraged ETF is one which seeks to provide 2 times the daily performance of the tracked index, with ``-2x'' standing for 2 times the inverse of the daily performance.
$^\text{a}$St.~Str.~Gl.~Adv. = State Street Global Advisors, St.~Str.~Gl.~Mkts. = State Street Global Markets.}
\label{tab:ETFs}
\end{tiny}
\end{table}

The ETFs studied include market ETFs, country ETFs, commodity ETFs and industry ETFs. The Select Sector SPDRs are ETFs that divide the S\&P 500 into nine sectors.

The data used in this study cover a period of about forty months, from the 13th July 2006 to the 1st December 2009. We use closing prices at the sampling frequency of 1 minute, and a resampling of the same data at a 5-minute frequency. In the former case we have the advantage of using the greatest possible amount of price data that is available to us, but this goes to the detriment of the regularity of the price series, since the 1-minute series present a number of missing observations depending on the level of liquidity. In the latter case, instead, we use less information but the series are more regular. So, the choice of the frequency to work with is a tradeoff between amount of information and regularity of the data. From the point of view of information theory, the best is of course to use as much data as possible, and this is what we shall predominantly do in the analyses. In order to assess to what extent results depend on the chosen frequency or on the regularity of the series, in some cases we shall perform the analyses on the 5-minute data as well. In Table \ref{tab:n_observations_ETF_data} we report the number of available price observations for the 55 ETFs. The number of days in the data sample is 854 for the vast majority of the ETFs, with six exceptions: 853 for EWU and RTH, 851 for EWW and MDY, 850 for RKH, 849 for EFA.

\begin{table}
\centering
\begin{scriptsize}
\begin{tabular}{|l|c|c|}
\hline
ETF & 1-minute observations & 5-minute observations\\
\hline
DIA & 328243 (98.81\%) & 65497 (99.60\%)\\
DXD & 276498 (83.23\%) & 63160 (96.05\%)\\
EEM & 329574 (99.21\%) & 65414 (99.48\%)\\
EFA & 327608 (99.20\%) & 65035 (99.48\%)\\
EWA & 250225 (75.32\%) & 64266 (97.73\%)\\
EWG & 192693 (58.00\%) & 61246 (93.14\%)\\
EWH & 273999 (82.48\%) & 64911 (98.71\%)\\
EWJ & 307276 (92.50\%) & 65401 (99.46\%)\\
EWM & 213888 (64.38\%) & 61752 (93.91\%)\\
EWS & 251315 (75.65\%) & 64113 (97.50\%)\\
EWT & 296362 (89.21\%) & 65143 (99.06\%)\\
EWU & 125763 (37.90\%) & 53109 (80.86\%)\\
EWW & 309977 (93.64\%) & 64921 (99.08\%)\\
EWY & 288868 (86.95\%) & 64922 (98.73\%)\\
EWZ & 328035 (98.74\%) & 65389 (99.44\%)\\
FXI & 317549 (95.59\%) & 65258 (99.24\%)\\
GDX & 280744 (84.51\%) & 64605 (98.25\%)\\
GLD & 325293 (97.92\%) & 65483 (99.58\%)\\
IBB & 218702 (65.83\%) & 62490 (95.03\%)\\
ICF & 278878 (83.95\%) & 64651 (98.32\%)\\
IJH & 208636 (62.80\%) & 62956 (95.74\%)\\
IJR & 277711 (83.60\%) & 65016 (98.87\%)\\
IVE & 224241 (67.50\%) & 63684 (96.85\%)\\
IVV & 302419 (91.03\%) & 65330 (99.35\%)\\
IVW & 239155 (71.99\%) & 64142 (97.54\%)\\
IWD & 293429 (88.33\%) & 65285 (99.28\%)\\
IWF & 300553 (90.47\%) & 65310 (99.32\%)\\
IWM & 330785 (99.57\%) & 65508 (99.62\%)\\
IWN & 291552 (87.76\%) & 65127 (99.04\%)\\
IWO & 290830 (87.55\%) & 65111 (99.02\%)\\
IYR & 313139 (94.26\%) & 65114 (99.02\%)\\
MDY & 321856 (97.23\%) & 65087 (99.33\%)\\
MZZ & 172022 (51.78\%) & 58381 (88.78\%)\\
PHO & 164836 (49.62\%) & 59896 (91.09\%)\\
QID & 324741 (97.75\%) & 65335 (99.36\%)\\
QLD & 299218 (90.07\%) & 64862 (98.64\%)\\
QQQQ & 330694 (99.54\%) & 65483 (99.58\%)\\
RKH & 246159 (74.45\%) & 61953 (94.66\%)\\
RTH & 283523 (85.45\%) & 64630 (98.40\%)\\
SDS & 302133 (90.95\%) & 64599 (98.24\%)\\
SLV & 251230 (75.62\%) & 64052 (97.41\%)\\
SPY & 330781 (99.57\%) & 65502 (99.61\%)\\
SSO & 255237 (76.83\%) & 61252 (93.15\%)\\
TIP & 190454 (57.33\%) & 61374 (93.33\%)\\
USO & 312469 (94.06\%) & 65243 (99.22\%)\\
XHB & 255791 (77.00\%) & 63412 (96.43\%)\\
XLB & 306421 (92.24\%) & 65040 (98.91\%)\\
XLE & 330192 (99.39\%) & 65404 (99.46\%)\\
XLF & 325164 (97.88\%) & 65276 (99.27\%)\\
XLI & 287595 (86.57\%) & 64975 (98.81\%)\\
XLK & 282942 (85.17\%) & 65095 (98.99\%)\\
XLP & 253028 (76.17\%) & 64589 (98.22\%)\\
XLU & 298217 (89.77\%) & 65011 (98.86\%)\\
XLV & 265267 (79.85\%) & 64688 (98.37\%)\\
XLY & 276167 (83.13\%) & 64726 (98.43\%)\\
\hline
\end{tabular}

\caption{Number of observations in the price series at frequencies of 1 minute and 5 minutes, both in absolute and relative terms.}
\label{tab:n_observations_ETF_data}
\end{scriptsize}
\end{table}

\subsection{Data cleaning}
We now outline the steps of the data cleaning procedure and establish some terminology. We then detail on the single steps in Appendix \ref{sec:appendix_data_cleaning}.

Starting from 1-minute closing prices, we first remove outliers (see Section \ref{sec:outliers}) and then calculate logarithmic returns
\begin{equation*}
	R_t = \ln \frac{p_t}{p_{t-1}} .
\end{equation*}
We then search for possible stock splits (see Section \ref{sec:splits}), in order to remove the huge returns in correspondence of them. Throughout this paper, we will refer to returns cleaned for possible splits, but not yet processed in any other way, as \emph{raw returns}. Then, the procedure of filtering out the daily seasonalities, by the removal of the intraday pattern as explained in Section \ref{sec:intraday_pattern}, leads to what we refer to as the \emph{deseasonalised returns} $\tilde{R}_t$. Finally, we normalise the deseasonalised returns by the volatility (see Section \ref{sec:vol_proxy}), thus obtaining the \emph{standardised returns} $r_t$.

\section{Binary alphabet} \label{sec:binary_alphabet}

\subsection{Binary discretisation of returns} \label{sec:discretising_returns}
The simplest symbolisation of price returns is the one which distinguishes only the two cases of positive and negative return, corresponding to the two behaviours of price moving up and moving down, respectively. Stationarity of price can not be included in either of the two behaviours as long as the symbolisation is defined in a symmetrical way. This point highlights the fact that returns are distributed on a discrete set of values. If they were instead distributed on a continuous set of values, taken according to an absolutely continuous distribution, the probability of taking a precise value would be zero, thus negligible in practical cases.

If $\{ r_t \}_t$ is the time series of non-zero returns, we define the 2-symbol sequence $\{ s_t \}_t$
\begin{equation} \label{eq:ret_2s_symbolisation}
	s_t = \left \{
		\begin{array}{ll}
			0 & \quad \textrm{if } r_t < 0\\
			1 & \quad \textrm{if } r_t > 0
		\end{array} \right .
	.
\end{equation}
In Table \ref{tab:n_symbols_2s} we report the number of the two symbols `0' and `1' in the ETF return series symbolised according to symbolisation (\ref{eq:ret_2s_symbolisation}). Ten ETFs exhibit a difference in the number of the two symbols, which is statistically significant at the 1\% significance level when a null model is assumed in which at each time instants prices have the same probability of moving up and moving down. Under such assumptions, the number of either upward and downward movements of the price has a binomial distribution.

\begin{table}
\centering
\begin{scriptsize}
\begin{tabular}{|l||r|r||r|r|}
\hline
ETF & negative 1-min r& positive 1-min ret & negative 5-min ret & positive 5-min ret\\
\hline
DIA & 147229 (49.84\%) & 148163 (50.16\%) & 31193 (49.79\%) & 31457 (50.21\%)\\
DXD & 128403 (49.99\%) & 128454 (50.01\%) & 30530 (50.06\%) & 30462 (49.94\%)\\
EEM & \textbf{143965} (49.61\%) & \textbf{146212} (50.39\%) & \textbf{30645} (49.29\%) & \textbf{31525} (50.71\%)\\
EFA & \textbf{136687} (49.75\%) & \textbf{138083} (50.25\%) & 29936 (49.60\%) & 30419 (50.40\%)\\
EWA & 91641 (49.91\%) & 91962 (50.09\%) & 26932 (49.59\%) & 27376 (50.41\%)\\
EWG & \textbf{68745} (49.61\%) & \textbf{69819} (50.39\%) & \textbf{25099} (49.16\%) & \textbf{25956} (50.84\%)\\
EWH & 88689 (49.97\%) & 88792 (50.03\%) & 25504 (49.79\%) & 25720 (50.21\%)\\
EWJ & 89201 (50.07\%) & 88943 (49.93\%) & 23583 (49.86\%) & 23715 (50.14\%)\\
EWM & 59347 (49.76\%) & 59918 (50.24\%) & 21354 (49.43\%) & 21845 (50.57\%)\\
EWS & 73803 (49.92\%) & 74028 (50.08\%) & 23376 (49.60\%) & 23749 (50.40\%)\\
EWT & 89682 (49.97\%) & 89796 (50.03\%) & 24732 (49.77\%) & 24965 (50.23\%)\\
EWU & 46786 (49.93\%) & 46915 (50.07\%) & 22403 (49.88\%) & 22510 (50.12\%)\\
EWW & 127767 (50.07\%) & 127398 (49.93\%) & 29984 (49.98\%) & 30006 (50.02\%)\\
EWY & 117983 (49.87\%) & 118594 (50.13\%) & 29416 (49.57\%) & 29928 (50.43\%)\\
EWZ & \textbf{146076} (49.69\%) & \textbf{147908} (50.31\%) & \textbf{30861} (49.44\%) & \textbf{31562} (50.56\%)\\
FXI & 142538 (49.78\%) & 143791 (50.22\%) & \textbf{30939} (49.46\%) & \textbf{31609} (50.54\%)\\
GDX & 125416 (50.05\%) & 125179 (49.95\%) & 30850 (50.13\%) & 30687 (49.87\%)\\
GLD & \textbf{143987} (49.75\%) & \textbf{145424} (50.25\%) & 30773 (49.69\%) & 31152 (50.31\%)\\
IBB & 96749 (50.08\%) & 96446 (49.92\%) & 29466 (49.79\%) & 29713 (50.21\%)\\
ICF & 126453 (49.96\%) & 126646 (50.04\%) & 30910 (49.97\%) & 30949 (50.03\%)\\
IJH & 95337 (49.74\%) & 96324 (50.26\%) & 30047 (49.65\%) & 30465 (50.35\%)\\
IJR & 121864 (49.85\%) & 122588 (50.15\%) & 30438 (49.60\%) & 30928 (50.40\%)\\
IVE & 98649 (49.99\%) & 98680 (50.01\%) & 29731 (49.89\%) & 29867 (50.11\%)\\
IVV & 137548 (49.92\%) & 138016 (50.08\%) & 31272 (49.84\%) & 31478 (50.16\%)\\
IVW & 102972 (49.83\%) & 103664 (50.17\%) & 29584 (49.80\%) & 29821 (50.20\%)\\
IWD & 127387 (50.00\%) & 127382 (50.00\%) & 30500 (49.90\%) & 30623 (50.10\%)\\
IWF & 123566 (49.78\%) & 124655 (50.22\%) & 29797 (49.70\%) & 30158 (50.30\%)\\
IWM & 145131 (49.79\%) & 146334 (50.21\%) & 30908 (49.69\%) & 31291 (50.31\%)\\
IWN & 129162 (49.88\%) & 129777 (50.12\%) & 30716 (49.61\%) & 31193 (50.39\%)\\
IWO & \textbf{128803} (49.74\%) & \textbf{130171} (50.26\%) & 30909 (49.86\%) & 31079 (50.14\%)\\
IYR & 137814 (50.03\%) & 137646 (49.97\%) & 30887 (49.92\%) & 30990 (50.08\%)\\
MDY & 147600 (49.77\%) & 148983 (50.23\%) & \textbf{31168} (49.48\%) & \textbf{31819} (50.52\%)\\
MZZ & 81551 (50.11\%) & 81179 (49.89\%) & 28481 (50.35\%) & 28080 (49.65\%)\\
PHO & 67273 (49.72\%) & 68019 (50.28\%) & 26724 (49.47\%) & 27297 (50.53\%)\\
QID & 150071 (50.09\%) & 149533 (49.91\%) & 31609 (50.10\%) & 31487 (49.90\%)\\
QLD & \textbf{139639} (49.63\%) & \textbf{141716} (50.37\%) & 31519 (49.94\%) & 31598 (50.06\%)\\
QQQQ & 140048 (49.81\%) & 141141 (50.19\%) & 30368 (49.87\%) & 30523 (50.13\%)\\
RKH & 115708 (50.23\%) & 114637 (49.77\%) & 30159 (50.33\%) & 29769 (49.67\%)\\
RTH & 94231 (49.88\%) & 94683 (50.12\%) & 26214 (50.28\%) & 25921 (49.72\%)\\
SDS & 140510 (49.96\%) & 140740 (50.04\%) & 31313 (50.07\%) & 31230 (49.93\%)\\
SLV & \textbf{110158} (49.61\%) & \textbf{111891} (50.39\%) & \textbf{29885} (49.43\%) & \textbf{30569} (50.57\%)\\
SPY & 150017 (49.89\%) & 150698 (50.11\%) & 31376 (49.78\%) & 31649 (50.22\%)\\
SSO & 118571 (49.85\%) & 119300 (50.15\%) & 29536 (49.85\%) & 29718 (50.15\%)\\
TIP & \textbf{81407} (49.36\%) & \textbf{83517} (50.64\%) & \textbf{27780} (49.26\%) & \textbf{28612} (50.74\%)\\
USO & 139722 (49.91\%) & 140246 (50.09\%) & 30851 (49.60\%) & 31346 (50.40\%)\\
XHB & 102953 (50.25\%) & 101927 (49.75\%) & \textbf{29155} (50.59\%) & \textbf{28475} (49.41\%)\\
XLB & 121761 (49.79\%) & 122788 (50.21\%) & 29421 (49.67\%) & 29815 (50.33\%)\\
XLE & \textbf{146881} (49.66\%) & \textbf{148876} (50.34\%) & \textbf{30955} (49.48\%) & \textbf{31607} (50.52\%)\\
XLF & 124317 (50.21\%) & 123291 (49.79\%) & 28938 (50.25\%) & 28646 (49.75\%)\\
XLI & 108042 (50.02\%) & 107957 (49.98\%) & 28508 (49.95\%) & 28566 (50.05\%)\\
XLK & 103627 (49.87\%) & 104160 (50.13\%) & 27995 (49.81\%) & 28208 (50.19\%)\\
XLP & 89120 (49.97\%) & 89211 (50.03\%) & 26559 (49.62\%) & 26961 (50.38\%)\\
XLU & 111554 (49.92\%) & 111901 (50.08\%) & 28330 (49.80\%) & 28553 (50.20\%)\\
XLV & 96419 (49.91\%) & 96754 (50.09\%) & 27511 (49.65\%) & 27897 (50.35\%)\\
XLY & 102484 (50.03\%) & 102359 (49.97\%) & 28201 (49.83\%) & 28391 (50.17\%)\\
\hline
\end{tabular}
\caption{Number of symbols `0' and `1' in the ETF return series symbolised according to (\ref{eq:ret_2s_symbolisation}). Bold values indicate return series where statistically significant asymmetry is found in the number of the two symbols (at 99\% confidence level).}
\label{tab:n_symbols_2s}
\end{scriptsize}
\end{table}

The simplest null hypothesis to check consists in the returns being independent, that is, indistinguishable from strong white noise. For such a model, the Shannon entropy of the process symbolised according to (\ref{eq:ret_2s_symbolisation}) equals one, since all the strings $s_1^k$ have the same probability $\mu (s_1^k) = \prod_{i=1}^k \mu (s_i) = \frac{1}{2^k}$ and the uncertainty is maximum. We stress that the assumption of independence is not realistic, since intraday returns are known to possess some features of correlation, mainly due to microstructure effects. However, this first analysis allows to measure to what degree the ETF series of symbolised non-zero returns depart from complete randomness. To put it another way, it provides an average quantification of how much the sign of the returns is predictable.

We test this hypothesis by comparing the entropy of the ETF return series and the entropy of a white noise process with independent Gaussian innovations. The former is of course the entropy of a single realisation (the time series), while the latter is the entropy estimated on a Monte Carlo simulation of 1000 realisations, each of which having the same length of the ETF time series to be compared with. Indicating with $\hat{h}_k$ the estimators of the rescaled entropies $\tilde{h}_k$ of Equation (\ref{eq:entropie_riscalate}), we check whether it holds or not that
\begin{equation} \label{eq:hETF_in_intervallo_di_confidenza}
	\hat{h}_k^\text{ETF} \in [\hat{h}_k^\text{WN,0.5\%},\hat{h}_k^\text{WN,99.5\%}] ,
\end{equation}
where $\hat{h}_k^{\text{WN,}x\text{\%}}$ denotes the $x$-th percentile of the white noise Monte Carlo simulation. It turns out that the efficiency hypothesis of the ETF return series being indistinguishable from white noise is rejected for the great majority of the ETFs, as expected. There are however few cases for which this basic test fails to reject the efficiency hypothesis. We note that there is some dependence of the results on the order $k$ of the considered entropies. For $k = 2$, property (\ref{eq:hETF_in_intervallo_di_confidenza}) does not hold for ETFs IYR, XHB, XLB; for $k = 3$ it is violated by ETFs MZZ, RKH, XLB; for $k = 6$ it does not hold only for RKH; for $k = 10$ exceptions to (\ref{eq:hETF_in_intervallo_di_confidenza}) are EWZ, IBB, IYR, MZZ, XLE. In Figure \ref{fig:h10_ETF_interv_conf_rb} we show the order 10 entropies $\hat{h}_{10}^\text{ETF}$ with confidence bands of 99\%. Looking at the top picture, we see that there is a number of ETFs quite close to the efficiency level of 1, 
although just five of them are not statistically distinguishable from perfectly efficient white noise. There is however a remarkable number of ETFs which are very far from the condition of efficiency. In what follows we shall analyse to what extent this is attributable to microstructure effects that can be modelled and consequently how much of this inefficiency remains after filtering out the predictability due to microstructural dependence.
\begin{figure}[ht!]
\centering
\includegraphics[width=\textwidth]{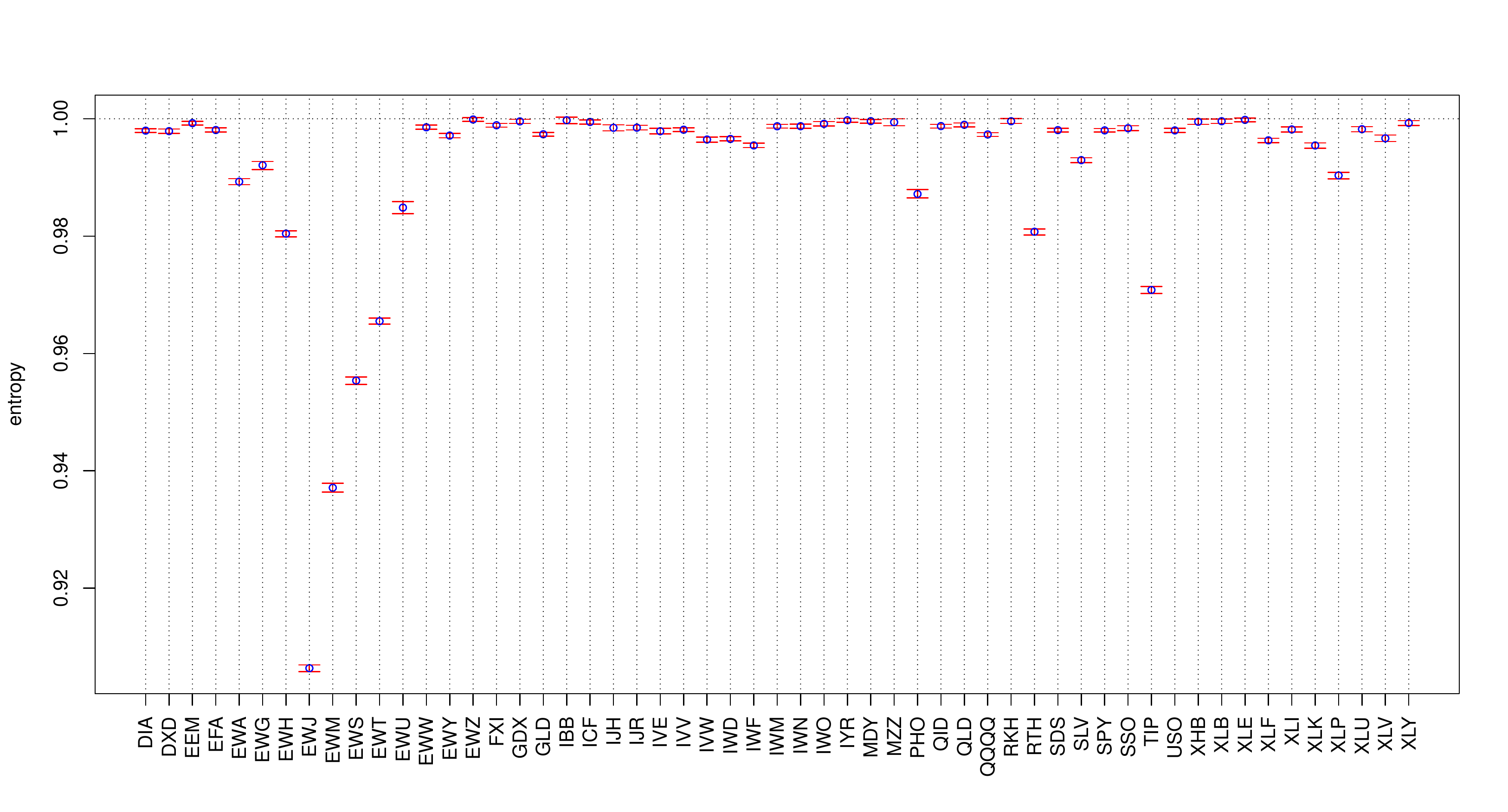}
\includegraphics[width=\textwidth]{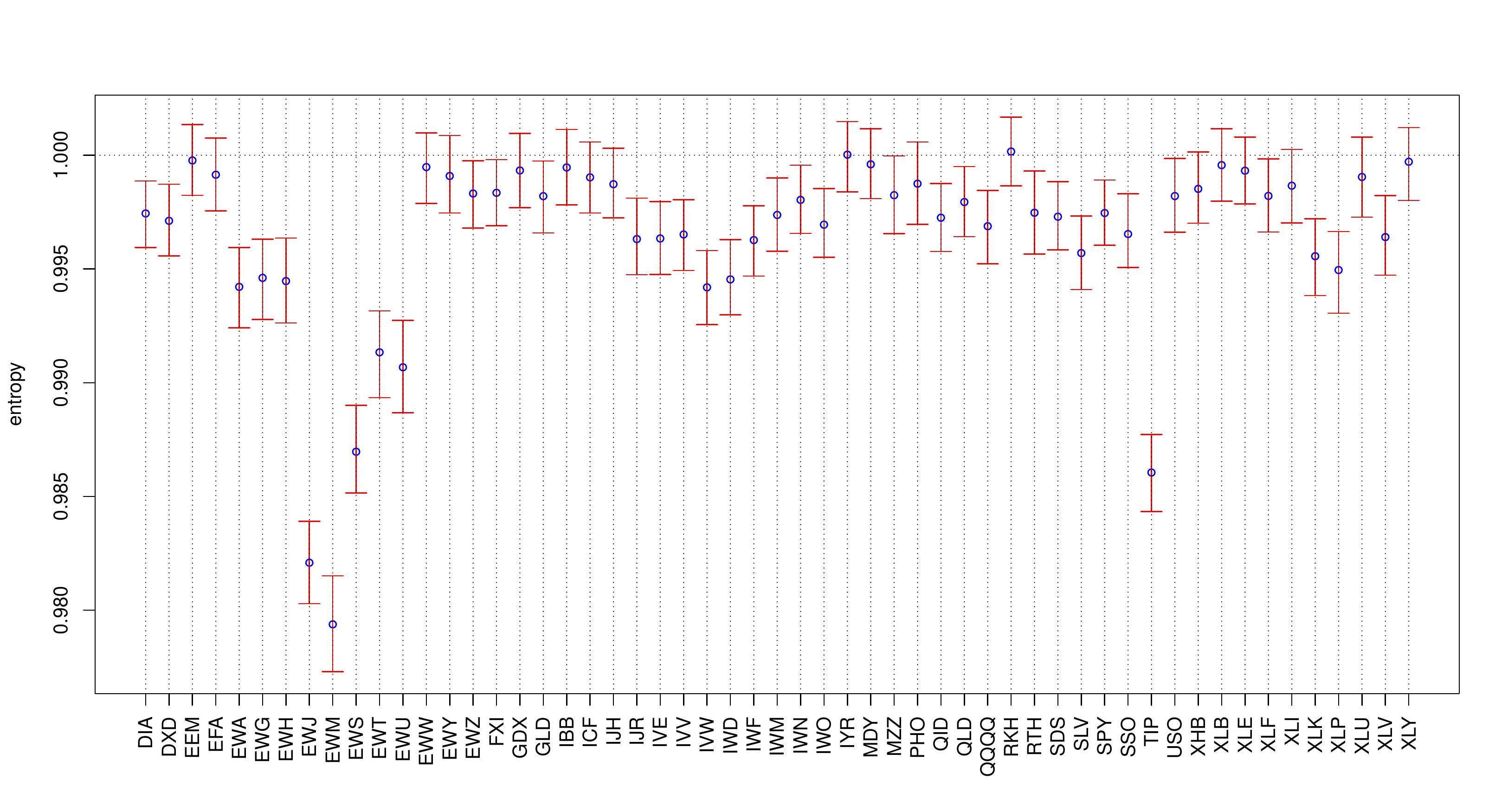}
\caption{Estimated $\hat{h}_{10}^\text{ETF}$ values (circles) of the 55 ETFs, with 99\% confidence bands obtained from Monte Carlo simulations of a white noise process of the same length of the ETFs 1-minute (top) and 5-minute (bottom) return series.}
\label{fig:h10_ETF_interv_conf_rb}
\end{figure}

By repeating the basic test (\ref{eq:hETF_in_intervallo_di_confidenza}) with the 5-minute return series (bottom graph in Figure \ref{fig:h10_ETF_interv_conf_rb}), we see how a lower sampling frequency makes the return series more efficient. Of course this is expected, since many inefficiencies must be imputable to microstructure effects, which are greater at higher frequencies. However, we still see that there are a number of ETFs (namely, EWJ, EWM, EWS, EWT, EWU, TIP) that, compared with others, maintain a large degree of inefficiency. It is also notable that in two cases (PHO and RTH) a relatively low entropy in the 1-minute return series corresponds to a relatively high entropy in the 5-minute return series. In these two cases it appears that the sole change in the frequency of the observations removes almost all the inefficiency. This contrasts with other cases of comparable low entropy for the 1-minute return series, such as EWU, which exhibit the same feature of having a relatively low entropy in the 5-minute return series as well. We point out that this should be considered as evidence of the fact that equally inefficient series may have different causes at the origins of their inefficiencies.

\subsection{Empirical analysis of return series as AR(1) or MA(1) processes} \label{sec:returns_as_AR1_or_MA1}
As argued in Section \ref{sec:modelling_hfd}, return time series have been modelled as an AR(1) or an MA(1) process. For these two simple models, we worked out in Section \ref{sec:AR1_MA1_entropy} the analytical calculation of the entropies $H_2$, $H_3$, $h_2$, $h_3$. In this section, we take the ETFs whose non-zero return series are indeed well described by an AR(1) or an MA(1) process and we provide a quantification of their inefficiency as the degree to which the entropy measured on the empirical series differs from the theoretical one.

More precisely, our procedure is the following. We first estimate the best ARMA($p,q$) model for each one of the non-zero return series (see Section \ref{sec:modelling_return_series_as_ARMApq} for the details of the estimate procedure and the results on all the ETF series). Then, for those series that have estimated parameters $p$ and $q$ such that $p + q \leq 1$---that is, for the series whose best ARMA estimate is an AR(1), an MA(1) or just a white noise process---we can easily calculate the theoretical values $h_2^{\text{th}}$ and $h_3^{\text{th}}$ of the entropies $h_2$ and $h_3$. They just equal 1 in the case of the white noise process, while for the AR(1) and the MA(1) processes can be obtained by applying the formulas derived in Section \ref{sec:AR1_MA1_entropy}. Finally, for each of these series we compute the inefficiency scores $I_2$ and $I_3$ defined by
\begin{equation} \label{eq:ineff_score_theoretical}
	I_k = \frac{h_k^{\text{th}} - \hat{h}_k^\text{ETF}}{\sigma_k} , \qquad k = 2, 3 ,
\end{equation}
where $\hat{h}_k^\text{ETF}$ is the conditional entropy of order $k$ measured on the binarised return series and $\sigma_k$ is the standard deviation of the estimator $\hat{h}_k$ on Monte Carlo realisations of the estimated process with the same length as the ETF's return series. The scores $I_k$ measure how much the empirical series depart from being a pure AR(1), MA(1) or white noise process. Making the basic assumption that a perfectly efficient ETF should perfectly follow one of these processes (the linear ARMA dependencies being due only to microstructure features), the scores $I_k$ indeed provide a measure of the amount of inefficiency present in the empirical return series.

In Figure \ref{fig:comparison_h3_ETF_and_theoretical} we show the comparison between the entropy $\tilde{h}_3$ of the ETF return series and the theoretical value of $h_3$---given by the formulas obtained in Section \ref{sec:AR1_MA1_entropy}---of the processes AR(1), MA(1) or white noise, which best fit the return series. We notice that, as can be expected, the theoretical entropy values are always higher than the return series entropy, meaning that the return series are less efficient than the corresponding AR(1), MA(1) or white noise processes. The difference between the two values provides us with a quantification of the inefficiency of the return series against their benchmark.
\begin{figure}[h]
\centering
\includegraphics[width=\textwidth]{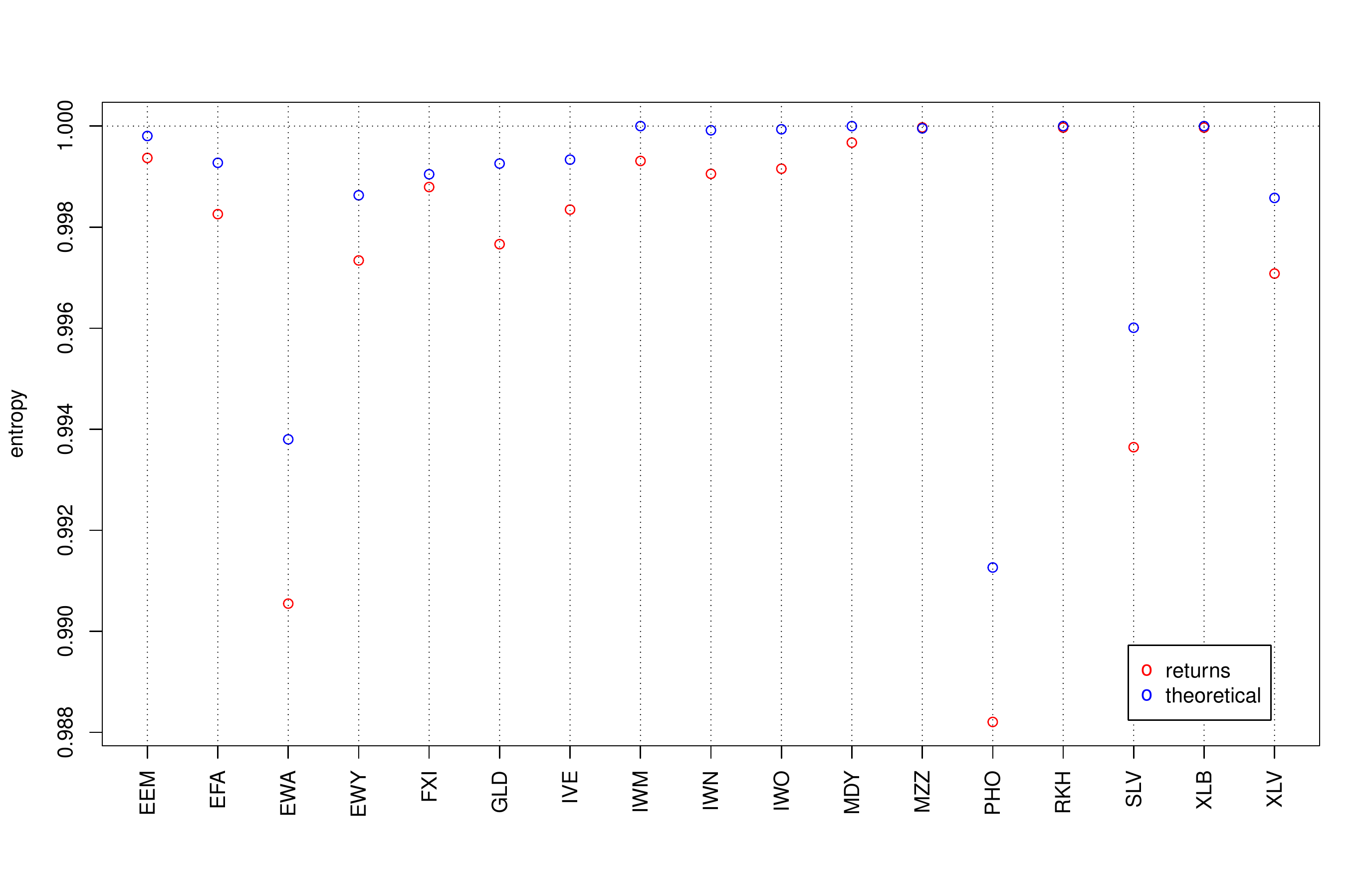}
\caption{The entropy $\tilde{h}_3$ of the non-zero 1-minute return series (red circles) and the theoretical values of the $h_3$ entropy of the corresponding best fitting AR(1), MA(1) or white noise models (blue circles), for the 17 ETFs well described by one of these models.}
\label{fig:comparison_h3_ETF_and_theoretical}
\end{figure}

In Table \ref{tab:ranking_per_entropia_teorica} we report the inefficiency rankings as determined by the inefficiency scores (\ref{eq:ineff_score_theoretical}). As the 17 ETFs treated here are a subset of the 55 ETFs for which we determine in Table \ref{tab:ranking_resARMA} other rankings, according to a different score of inefficiency, we report in parentheses this latter ranking for comparison purposes. We notice that, although the two approaches differ to a certain degree for the most inefficient ETFs (top and central rankings), they give quite stable results for the most efficient ones (bottom rankings).
\begin{table}[!htp]
\centering
\begin{scriptsize}
\begin{tabular}{|l||r|rr||r|rr|}
\hline
ETF  & $I_2$ & rank & & $I_3$ & rank & \\
\hline
IWM & 81.035 &  1 &  (9) & 65.330 &  1 &  (8)\\
IWO & 43.994 &  2 &  (4) & 38.037 &  2 &  (9)\\
MDY & 41.642 &  4 & (12) & 30.622 &  3 & (12)\\
EWA & 43.527 &  3 &  (1) & 29.653 &  4 &  (1)\\
PHO & 40.688 &  5 & (10) & 26.298 &  5 & (11)\\
IWN & 27.514 &  8 &  (8) & 20.956 &  6 &  (6)\\
GLD & 27.258 &  9 &  (5) & 19.355 &  7 &  (4)\\
XLV & 28.054 &  7 &  (7) & 19.274 &  8 &  (7)\\
SLV & 28.914 &  6 & (14) & 19.192 &  9 & (14)\\
IVE & 23.009 & 11 &  (6) & 17.939 & 10 &  (5)\\
EWY & 25.015 & 10 &  (2) & 17.012 & 11 &  (2)\\
EEM & 18.109 & 13 & (11) & 13.342 & 12 & (10)\\
EFA & 19.589 & 12 &  (3) & 13.131 & 13 &  (3)\\
FXI & 11.726 & 14 & (13) &  7.517 & 14 & (13)\\
XLB &  2.133 & 17 & (16) &  2.285 & 15 & (16)\\
RKH &  5.917 & 15 & (15) &  2.144 & 16 & (15)\\
MZZ &  2.716 & 16 & (17) &  0.360 & 17 & (17)\\
\hline
\end{tabular}
\caption{Inefficiency scores $I_k$, for $k = 2, 3$, and corresponding rankings of the 17 ETFs well described by AR(1), MA(1) or white noise model (first means most inefficient). Rankings in parentheses refer to relative positions in Table \ref{tab:ranking_resARMA}.}
\label{tab:ranking_per_entropia_teorica}
\end{scriptsize}
\end{table}

\subsection{Empirical analysis of return series as ARMA($p,q$) processes} \label{sec:modelling_return_series_as_ARMApq}
In Section \ref{sec:modelling_hfd} we showed how a simple model for the efficient and the observed price implies that returns follow an MA(1) process. As already anticipated, the same model can explain more complex autocorrelation structures, by changing the structure of the pricing error component $u_t$ in Equations (\ref{eq:observed_price}) and (\ref{eq:observed_return}). We now change perspective and adopt an approach which is data-driven. In this section we make no a priori assumption and start from the empirical autocorrelation functions of the ETF data, which show different scenarios.

In Figure \ref{fig:autocorrelation_ETF_returns_1min} we show the sample autocorrelation of the series made up of non-zero 1-minute returns. At the top left we see a case (the ETF SLV) where significant autocorrelation is present at the lag 1 and almost at no other lag, which is the typical structure of a MA(1) process. At the top right the ETF EWA shows a situation where the autocorrelation is significant at the first lag, which is the dominant part, but also at lags 2 and 3, with an alternating sign and a decreasing absolute value which are typical of an AR(1) process with negative parameter. The ETF EWM at the bottom left of the figure has some features which recall those of EWA, such as the alternating sign and the decreasing absolute value, yet the situation is more complicated. The autocorrelation function is in fact statistically significant for many lags (from 1 to 11). Furthermore, since the decay does not look exponential, one single autoregressive parameter should not suffice to describe the 
dynamics of returns. Finally, at the bottom right we report a case (the ETF SPY) where some negative autocorrelation is significant at many lags, but there is no clear structure in the autocorrelation function. The four scenarios depicted here do not cover all the behaviours of the autocorrelation function of the 55 ETFs. However they provide a picture of how different the autocorrelation functions of different assets can be.
\begin{figure}[ht!]
\centering
\includegraphics[width=\textwidth]{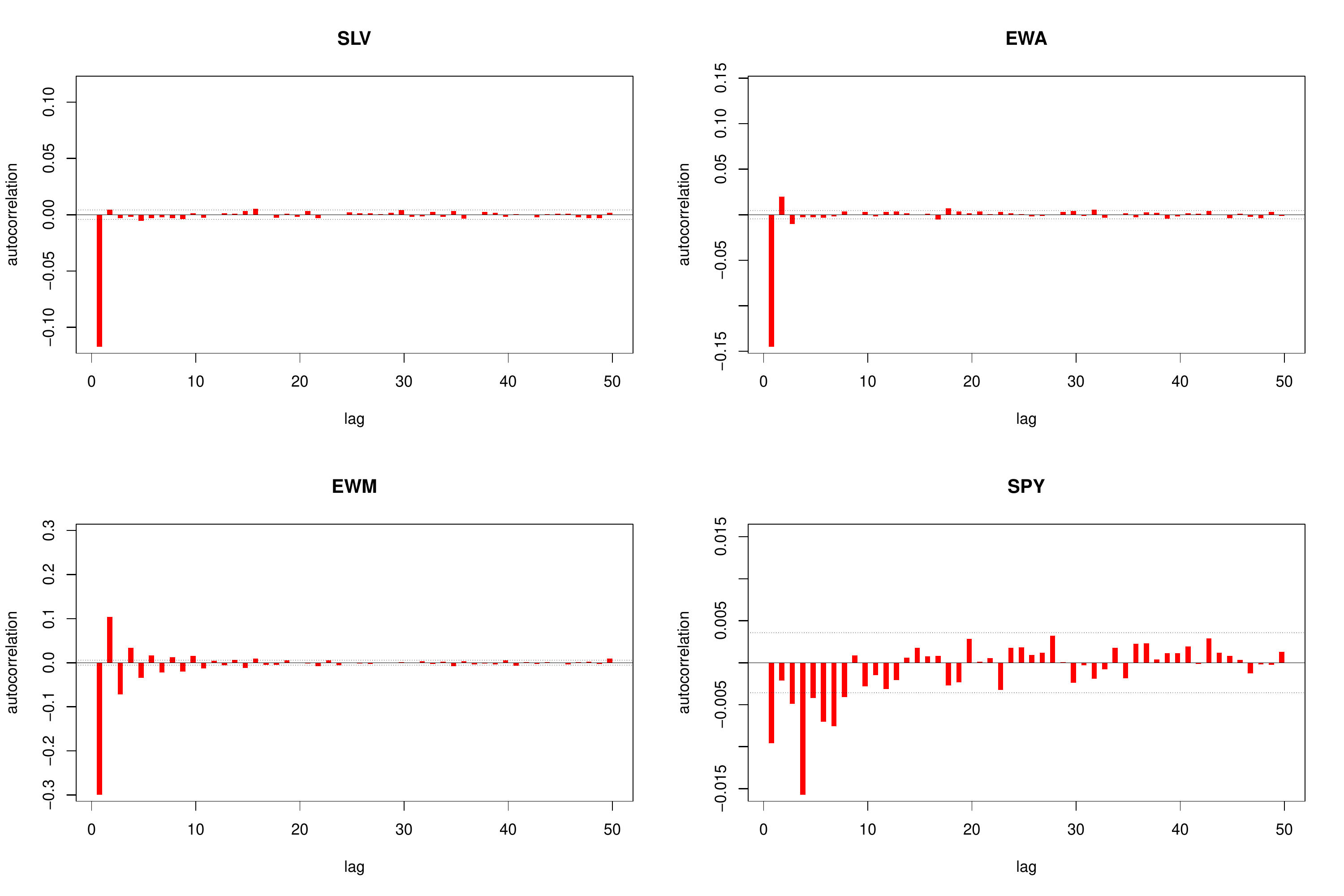}
\caption{Autocorrelation functions of non-zero 1-minute returns for the ETFs SLV, EWA, EWM, SPY. Dotted lines indicate significance levels given by $\pm \frac{1.96}{\sqrt{L}}$, where $L$ is the length of the series.}
\label{fig:autocorrelation_ETF_returns_1min}
\end{figure}

For these reasons, we do not assume any a priori model and instead we look for the best linear ARMA($p,q$) model that describes the data. In selecting the best model to fit our data we use the Bayesian Information Criterion (BIC), to strongly penalise complex models with a large number of parameters.

In Table \ref{tab:pq_ARMA_BIC} we report, for every ETF, the AR and MA orders $p$ and $q$, such that the series of non-zero 1-minute returns is best fitted with an ARMA($p,q$) model, where \emph{best} means that the model is the one that minimises the BIC. As we can see, the return series of the ETFs whose autocorrelations are represented in Figure \ref{fig:autocorrelation_ETF_returns_1min} are best fitted with different models: SLV with an MA(1) model, EWA with an AR(1) model, EWM with an ARMA($3,2$) model, SPY with an ARMA($2,4$) model.
\begin{table}[h]
\centering
\begin{scriptsize}
\begin{tabular}{|l|c|c||l|c|c||l|c|c||l|c|c||l|c|c|}
\hline
ETF & $p$ & $q$ & ETF & $p$ & $q$ & ETF & $p$ & $q$ & ETF & $p$ & $q$ & ETF & $p$ & $q$\\
\hline
DIA & 0 & 4 & EWU & 2 & 1 & IVE & 0 & 1 & PHO & 1 & 0 & USO & 3 & 2\\
DXD & 0 & 4 & EWW & 1 & 2 & IVV & 1 & 3 & QID & 2 & 2 & XHB & 2 & 0\\
EEM & 1 & 0 & EWY & 1 & 0 & IVW & 2 & 1 & QLD & 2 & 4 & XLB & 0 & 0\\
EFA & 1 & 0 & EWZ & 2 & 1 & IWD & 1 & 3 & QQQQ & 0 & 4 & XLE & 0 & 2\\
EWA & 1 & 0 & FXI & 1 & 0 & IWF & 2 & 2 & RKH & 0 & 0 & XLF & 1 & 1\\
EWG & 2 & 1 & GDX & 1 & 2 & IWM & 0 & 0 & RTH & 3 & 1 & XLI & 2 & 2\\
EWH & 1 & 2 & GLD & 1 & 0 & IWN & 1 & 0 & SDS & 1 & 3 & XLK & 2 & 2\\
EWJ & 3 & 2 & IBB & 0 & 2 & IWO & 1 & 0 & SLV & 0 & 1 & XLP & 2 & 1\\
EWM & 3 & 2 & ICF & 0 & 3 & IYR & 0 & 2 & SPY & 2 & 4 & XLU & 0 & 2\\
EWS & 3 & 2 & IJH & 0 & 4 & MDY & 0 & 0 & SSO & 1 & 1 & XLV & 1 & 0\\
EWT & 3 & 2 & IJR & 5 & 1 & MZZ & 0 & 1 & TIP & 1 & 1 & XLY & 0 & 2\\
\hline
\end{tabular}
\caption{AR and MA orders $p$ and $q$ resulting from the minimisation of the BIC among all models ARMA($p,q$) with $p + q \leq 8$.}
\label{tab:pq_ARMA_BIC}
\end{scriptsize}
\end{table}
There are 13 out of the 55 series that are best fitted either with an AR(1) model or an MA(1) model, plus other 4 that are best fitted with a simple white noise process. We stress that for these 17 cases we know what the corresponding entropies $H_2$, $H_3$, $h_2$, $h_3$ of the binary symbolisation should be, since for a white noise process they equal 1 and for the AR(1) and the MA(1) models they are given by the analytical results obtained in Section \ref{sec:AR1_MA1_entropy}. However, we now want to perform a hypothesis test to assess whether there are further dependencies other than the linear ARMA structure. If all the amount of predictability of the series is due to their linear ARMA structure, once it is filtered out there would remain no other dependence and the series of residuals should not be distinguishable from white noise. Figure \ref{fig:comparison_acf_rn_resARMA} shows how the ARMA residuals no longer contain the significant autocorrelation detected in returns.
\begin{figure}[h]
\centering
\includegraphics[width=\textwidth]{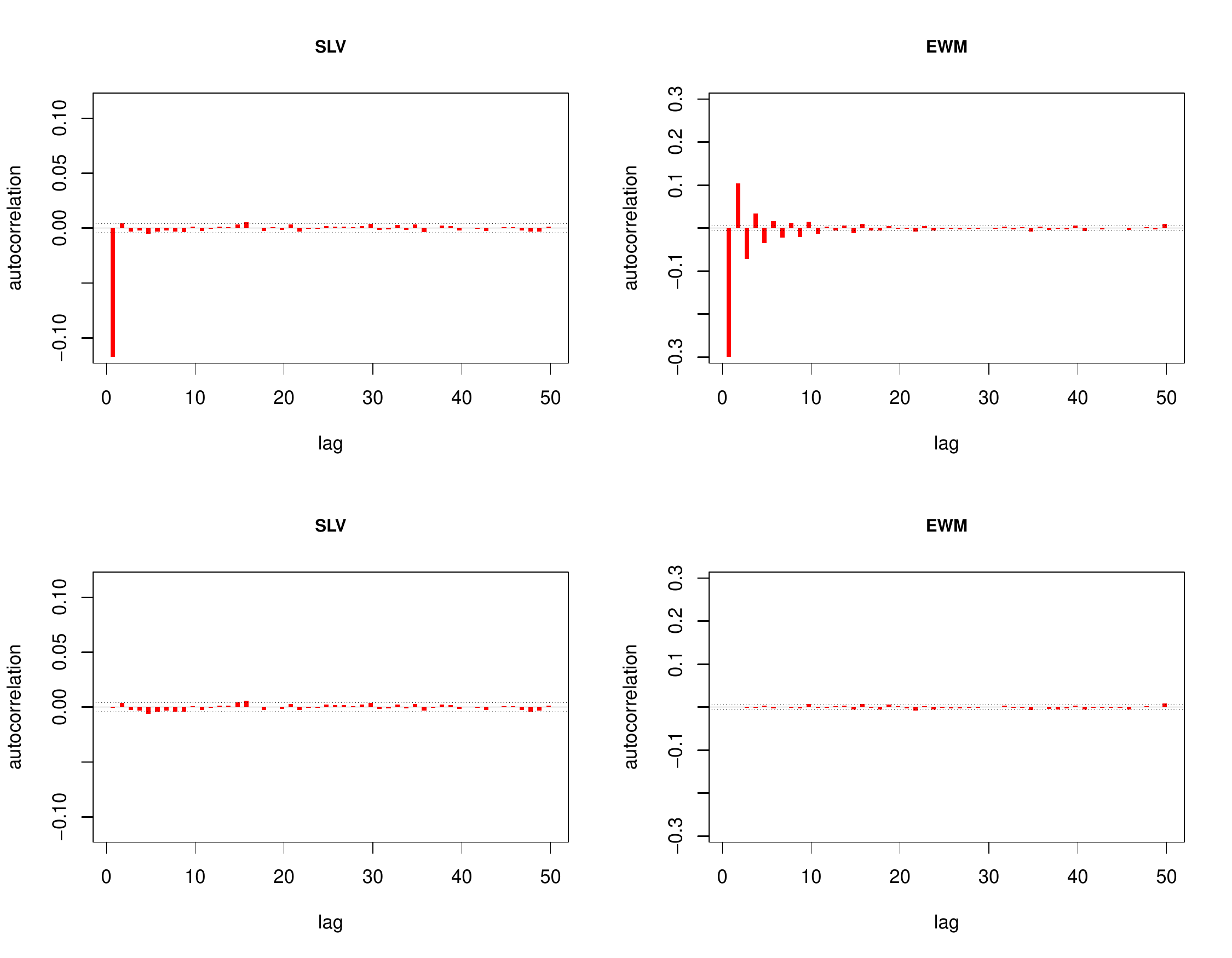}
\caption{Autocorrelation of the series of non-zero 1-minute returns (top) and of their ARMA($p,q$) residuals (bottom) for the ETFs SLV (left) and EWM (right). The values of ($p,q$) are ($0,1$) for SLV and ($3,2$) for EWM.}
\label{fig:comparison_acf_rn_resARMA}
\end{figure}
In order to measure how much inefficiency is there in the return series that is not due to the ARMA structure, we analyse the residuals of the ARMA estimates of the series, symbolise them according to (\ref{eq:ret_2s_symbolisation}) and then compare the entropy of the symbolised series with the values obtained by a Monte Carlo simulation of a white noise process. Before showing the results of this test, we first show in Figure \ref{fig:comparison_h2_ETF_and_ARMA} the comparison between the entropy $\tilde{h}_2$, defined by Equations (\ref{eq:entropie_riscalate}), of the return series and of their ARMA residuals. We stress that the same qualitative picture that we are about to describe is valid also for the entropies $\tilde{h}_k$, with $k \neq 2$. It is interesting to note how different the behaviour of the various ETFs is. Some ETFs already have a high entropy in the return series, so that taking the ARMA residuals does not lead to a noticeable increase in the entropy value. Others, that have a relatively low entropy, show a significant increase in the value of the entropy when ARMA residuals are considered in place of returns. However, there are cases where the inefficiency of the return series vanishes almost completely when taking the ARMA residuals, such as for PHO, SLV, TIP. For these ETFs it seems that a large part of the apparent inefficiency is embodied and explained by the linear dependence structure of ARMA models. For many other cases, instead, the ARMA residuals continue to contain some (nonlinear) inefficiencies.
\begin{figure}[h]
\centering
\includegraphics[width=\textwidth]{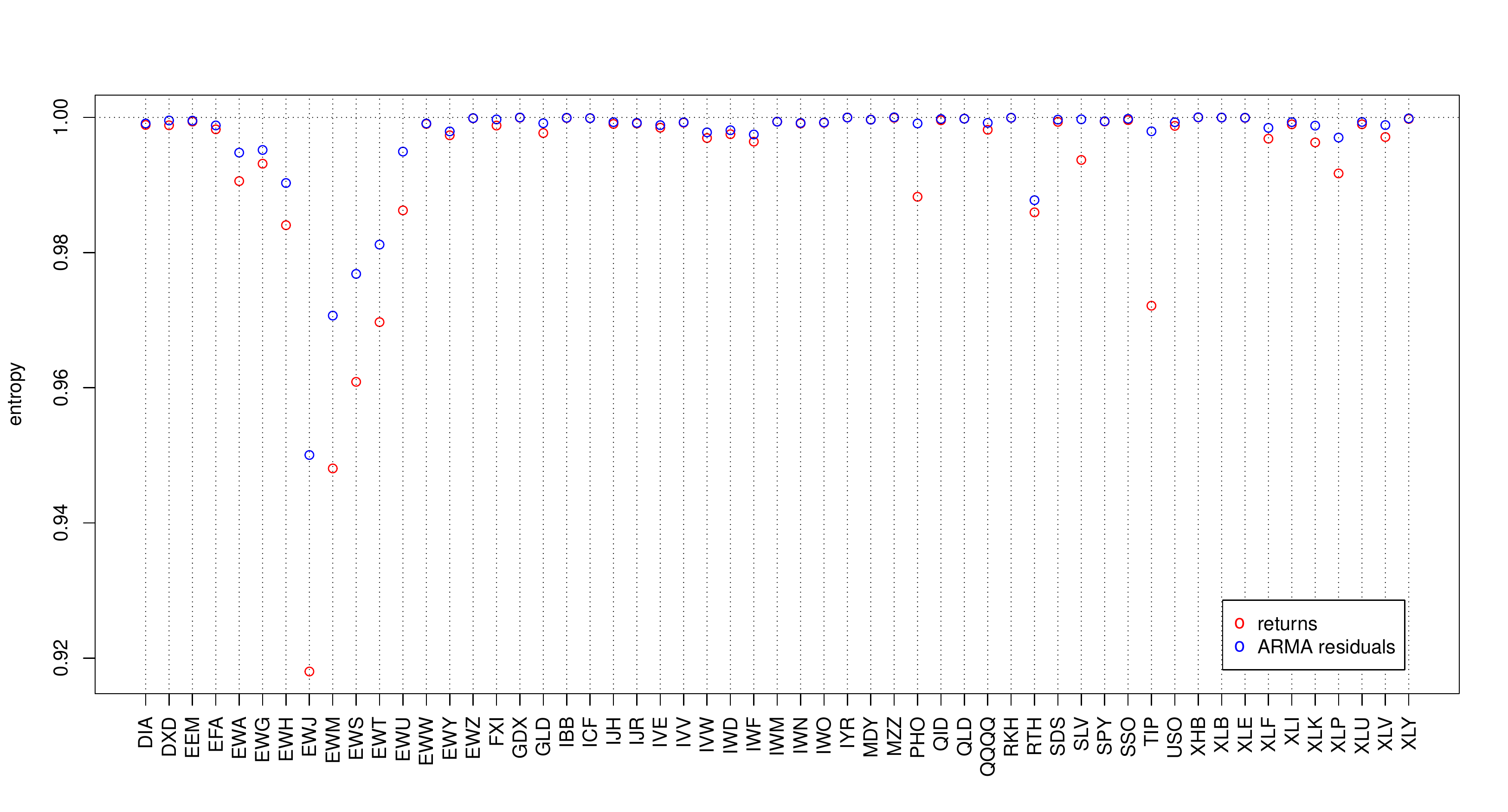}
\caption{The entropy $\tilde{h}_2$ of the non-zero 1-minute return series (red circles) and of the corresponding series of ARMA($p,q$) residuals, where $p$ and $q$ are those of Table \ref{tab:pq_ARMA_BIC} (blue circles), for the 55 ETFs.}
\label{fig:comparison_h2_ETF_and_ARMA}
\end{figure}

We now go back to the hypothesis test we mentioned earlier, to see whether the residuals of ARMA estimates are distinguishable from white noise. Analogously to what we did in Section \ref{sec:discretising_returns}, we check the condition
\begin{equation} \label{eq:hresARMA_in_intervallo_di_confidenza_WN}
	\hat{h}_k^\text{res} \in [\hat{h}_k^\text{WN,0.5\%},\hat{h}_k^\text{WN,99.5\%}] ,
\end{equation}
where $\hat{h}_k^\text{res}$ denotes the estimated entropy of the series of ARMA residuals. It turns out that even considering the ARMA residuals of return series we reject the hypothesis of efficiency for the large majority of the ETFs. The cases where the test does not reject the efficiency hypothesis are the following. For $k = 2$, GDX, MZZ, XHB, XLB; for $k = 3$, IBB, MZZ, XHB, XLB; for $k = 6$, IBB, XHB, XLE; for $k = 10$, EWZ, FXI, GDX, IBB, ICF, IYR, XHB, XLE, XLY. Referring to the results found in Section \ref{sec:discretising_returns}, we note that the cases for which the efficiency was rejected already analysing the entropy of the return series are cases where it is also rejected after filtering out the linear part, as one would expect. There are however a few exceptions (IYR for $k = 2$, RKH for $k = 3, 6$, MZZ for $k = 10$).

We argue that the predictability that remains in a series after removing the amount due to the linear component, that is, after taking the residuals of ARMA estimates, constitutes nonlinear inefficiency proper to that asset. It is interesting to quantify these inefficiencies and rank the ETFs according to their measures. We perform this by measuring how faraway the entropy of the residuals is from the value 1 of pure white noise. In particular, for each ETF we define the following measure of inefficiency:
\begin{equation}
	I_k^{\text{res}} = \frac{1 - \hat{h}_k^\text{res}}{\sigma_k^\text{WN}} ,
\end{equation}
where $\sigma_k^\text{WN}$ is the standard deviation of the estimator $\hat{h}_k$ on a white noise process of the same length as the ETF's return series. A perfectly efficient series has values of $I_k$ equal to 0. The farther from 0 the value of $I_k$, the greater the inefficiency. Results are reported in Table \ref{tab:ranking_resARMA}, for $k = 2, 3, 6, 10$. Note that varying $k$ the ranking positions are quite stable for the most inefficient ETFs (those in the first nine positions). Many other ETFs display different degrees of variability, ranging from differences across the four rankings of at most one position to differences of eleven places.
\begin{table}[!htp]
\centering
\begin{scriptsize}
\begin{tabular}{|l||r|r||r|r||r|r||r|r|}
\hline
ETF  & $I_2^{\text{res}}$ & rank & $I_3^{\text{res}}$ & rank & $I_6^{\text{res}}$ & rank & $I_{10}^{\text{res}}$ & rank\\
\hline
EWJ  & 4933.980 &  1 & 2807.921 &  1 & 1177.880 &  1 & 318.380 &  1\\
EWM  & 1505.712 &  3 & 1132.796 &  2 &  510.979 &  2 & 147.319 &  2\\
EWS  & 1495.015 &  4 & 1105.510 &  3 &  445.117 &  3 & 119.970 &  3\\
EWT  & 1911.672 &  2 & 1053.470 &  4 &  427.720 &  4 & 110.355 &  4\\
RTH  & 1307.800 &  5 &  790.725 &  5 &  328.179 &  5 &  94.070 &  5\\
EWH  &  737.441 &  6 &  546.053 &  6 &  228.858 &  6 &  70.411 &  6\\
EWA  &  417.549 &  7 &  292.831 &  7 &  118.729 &  7 &  36.094 &  7\\
IWF  &  271.103 &  9 &  207.461 &  9 &   83.385 &  8 &  23.151 &  8\\
EWG  &  371.408 &  8 &  212.230 &  8 &   81.883 &  9 &  23.147 &  9\\
IWD  &  214.192 & 13 &  166.222 & 11 &   69.487 & 10 &  20.591 & 10\\
EWU  &  208.818 & 15 &  152.508 & 14 &   55.773 & 14 &  16.896 & 11\\
IVW  &  256.876 & 10 &  155.098 & 13 &   56.317 & 13 &  16.794 & 12\\
XLP  &  232.995 & 11 &  172.684 & 10 &   66.368 & 11 &  16.750 & 13\\
EWY  &  213.702 & 14 &  156.279 & 12 &   57.306 & 12 &  15.309 & 14\\
TIP  &  187.414 & 16 &  108.947 & 16 &   48.645 & 16 &  13.820 & 15\\
QQQQ &   98.620 & 27 &   81.404 & 22 &   39.632 & 20 &  13.484 & 16\\
XLF  &  214.455 & 12 &  122.775 & 15 &   48.263 & 17 &  13.475 & 17\\
DIA  &  156.796 & 18 &  104.936 & 17 &   49.220 & 15 &  13.401 & 18\\
IVV  &  113.026 & 20 &   79.714 & 23 &   42.456 & 19 &  13.142 & 19\\
SPY  &   95.455 & 28 &   75.835 & 25 &   45.118 & 18 &  12.925 & 20\\
USO  &  106.876 & 23 &   70.114 & 26 &   34.931 & 24 &  11.419 & 21\\
GLD  &  106.747 & 24 &   81.771 & 21 &   34.481 & 25 &  10.763 & 22\\
IVE  &  100.316 & 26 &   76.238 & 24 &   35.123 & 23 &  10.693 & 23\\
IWM  &   81.035 & 31 &   65.330 & 29 &   30.048 & 27 &  10.258 & 24\\
EFA  &  186.516 & 17 &  104.336 & 18 &   38.174 & 21 &   9.839 & 25\\
XLK  &  113.568 & 19 &   89.810 & 19 &   36.301 & 22 &   9.646 & 26\\
SDS  &   52.276 & 37 &   47.967 & 35 &   29.521 & 28 &   8.911 & 27\\
IJR  &  113.007 & 21 &   63.456 & 30 &   28.002 & 29 &   8.881 & 28\\
EWW  &  102.933 & 25 &   85.422 & 20 &   33.454 & 26 &   8.665 & 29\\
XLU  &   67.310 & 33 &   48.481 & 34 &   26.354 & 32 &   7.975 & 30\\
DXD  &   52.030 & 38 &   48.874 & 33 &   24.541 & 34 &   7.344 & 31\\
IWO  &  107.853 & 22 &   62.741 & 31 &   25.647 & 33 &   7.327 & 32\\
XLV  &   95.166 & 29 &   68.339 & 28 &   27.939 & 30 &   7.320 & 33\\
IWN  &   93.928 & 30 &   68.919 & 27 &   27.043 & 31 &   7.275 & 34\\
XLI  &   66.079 & 34 &   51.481 & 32 &   21.970 & 35 &   7.262 & 35\\
QLD  &   23.058 & 43 &   23.942 & 43 &   15.057 & 40 &   7.062 & 36\\
SSO  &   21.837 & 44 &   23.412 & 44 &   16.504 & 38 &   6.430 & 37\\
IJH  &   57.385 & 36 &   43.514 & 37 &   18.616 & 36 &   6.090 & 38\\
SLV  &   28.499 & 42 &   19.818 & 45 &   10.301 & 44 &   5.525 & 39\\
QID  &   38.927 & 40 &   32.219 & 39 &   17.209 & 37 &   5.203 & 40\\
PHO  &   67.431 & 32 &   40.896 & 38 &   16.414 & 39 &   5.196 & 41\\
MDY  &   42.881 & 39 &   28.769 & 40 &   14.460 & 42 &   4.976 & 42\\
EEM  &   60.268 & 35 &   45.020 & 36 &   14.625 & 41 &   4.530 & 43\\
MZZ  &   -1.888 & 54 &    0.130 & 54 &    3.503 & 50 &   3.358 & 44\\
XLB  &    2.133 & 52 &    2.285 & 52 &    5.464 & 47 &   2.800 & 45\\
RKH  &    4.666 & 49 &    3.995 & 51 &    2.763 & 52 &   2.543 & 46\\
XHB  &   -2.395 & 55 &    0.067 & 55 &   -0.129 & 55 &   2.320 & 47\\
XLY  &   12.893 & 46 &   11.997 & 47 &   10.748 & 43 &   1.943 & 48\\
IYR  &    2.885 & 51 &    8.558 & 48 &    4.722 & 48 &   1.715 & 49\\
FXI  &   36.978 & 41 &   26.709 & 41 &    9.662 & 46 &   1.558 & 50\\
ICF  &   11.986 & 47 &   26.368 & 42 &   10.106 & 45 &   1.364 & 51\\
GDX  &    1.047 & 53 &    4.637 & 49 &    3.174 & 51 &   0.614 & 52\\
IBB  &    4.424 & 50 &    1.568 & 53 &    1.220 & 54 &  -0.372 & 53\\
EWZ  &   17.015 & 45 &   14.064 & 46 &    4.359 & 49 &  -0.747 & 54\\
XLE  &    4.929 & 48 &    4.279 & 50 &    1.440 & 53 &  -1.457 & 55\\
\hline
\end{tabular}
\caption{Inefficiency scores $I_k^{\text{res}}$, for $k = 2, 3, 6, 10$, and corresponding inefficiency rankings of the 55 ETFs (first means most inefficient).}
\label{tab:ranking_resARMA}
\end{scriptsize}
\end{table}

We do not know what the reasons of the inefficiencies and the mechanisms generating them are. They may have origin in technical details of how the trading of that particular asset is regulated by the rules of the market, or in the particular strategies adopted by market makers, or also in microstructure details (such as the relative tick size) that may have different impact on different assets. Inefficiencies may also have a more \emph{fundamental} origin, that is, they may be due to the economics of the asset and of other financial assets related to it.

Investigating the relationship between entropy and the relative tick size we find interesting results. Note that for our ETF data looking at the relative tick size (that is, the ratio between the minimum possible price variation and the price) is equivalent to looking at the price, since the absolute tick size is equal for all the 55 ETFs. It clearly emerges from the scatter plots in Figure \ref{fig:inefficiency_median_price} that the five most inefficient ETFs are those with the lowest price. A possible interpretation of this lies in the fact that an asset's price with a large relative tick size is subject to more predictable price changes. If we think of the rational price moving on a continuous scale, a change in the observed price means that the rational price has passed a tick level. Now, for an asset with a low price (or, equivalently, with a large relative tick size) it would be much more difficult to cross another tick level in the same direction than it is for an asset with a large price (that is, with a small relative tick size). Thus, most probably the observed price either moves backward or remains constant. Recall that, in the symbolisation we are considering, the stationarity of observed price (corresponding to a zero return) is ignored and simply discarded. With this considerations in mind, it is reasonable to expect the price of an asset with a large relative tick size to show some oscillating behaviour producing a more predictable symbolic sequence.
\begin{figure}[h]
\centering
\includegraphics[width=\textwidth]{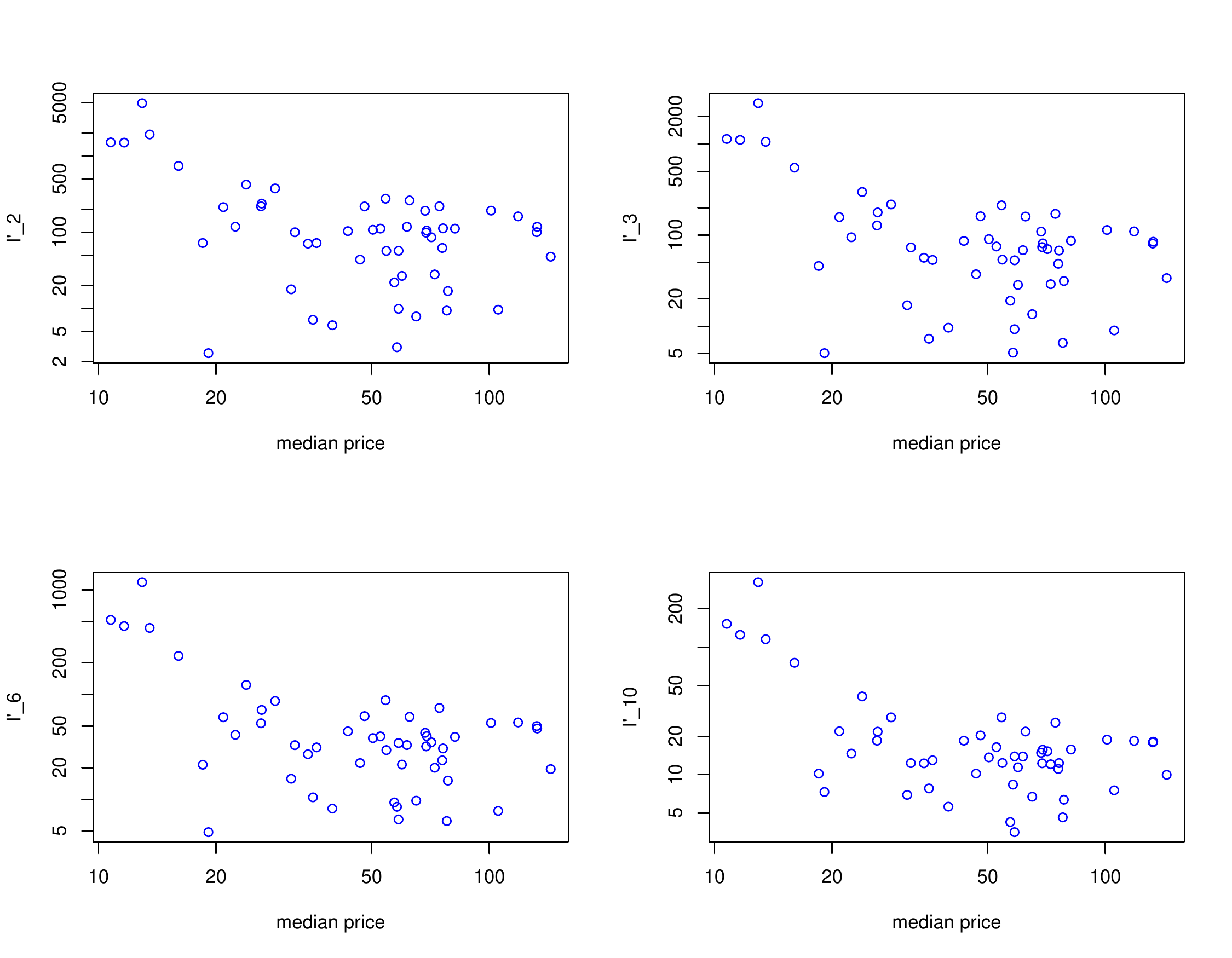}
\caption{Scatter plots in log-log scale showing the relation between inefficiency (scores $I_2^\text{res}$, $I_3^\text{res}$, $I_6^\text{res}$, $I_{10}^\text{res}$) and median price of 51 ETFs (the four ETFs with a detected split/merge are omitted). The graphed scores $I'_k = I_k^\text{res} + 5$ are shifted versions of the inefficiency scores $I_k^\text{res}$, in order to be positive for the logarithmic plot.}
\label{fig:inefficiency_median_price}
\end{figure}

Another consideration on the rankings of Table \ref{tab:ranking_resARMA} is one which concentrates on the ETFs which track a country index, in particular those from EWA to EWZ in Table \ref{tab:ETFs}. We notice that the most inefficient ETFs are those relative to the Asian countries (such as Japan, Malaysia, Singapore, Taiwan, Hong Kong) and to Australia. The ETFs tracking the indices of Germany and United Kingdom, though quite inefficient as well, are not in the very first places, lying behind the group of Asian countries. Finally, ETFs relative to Mexican and Brazilian indices are much behind in the inefficiency rankings, with the latter being among the most efficient ETFs. An exception to this classification is the ETF tracking the index of South Korea, which is not ranked in the group of ETFs of the other Asian countries. Apart from this exception, it can be argued that the levels of detected inefficiency follow the time distances from the New York time.
\begin{figure}[h]
\centering
\includegraphics[width=0.8\textwidth]{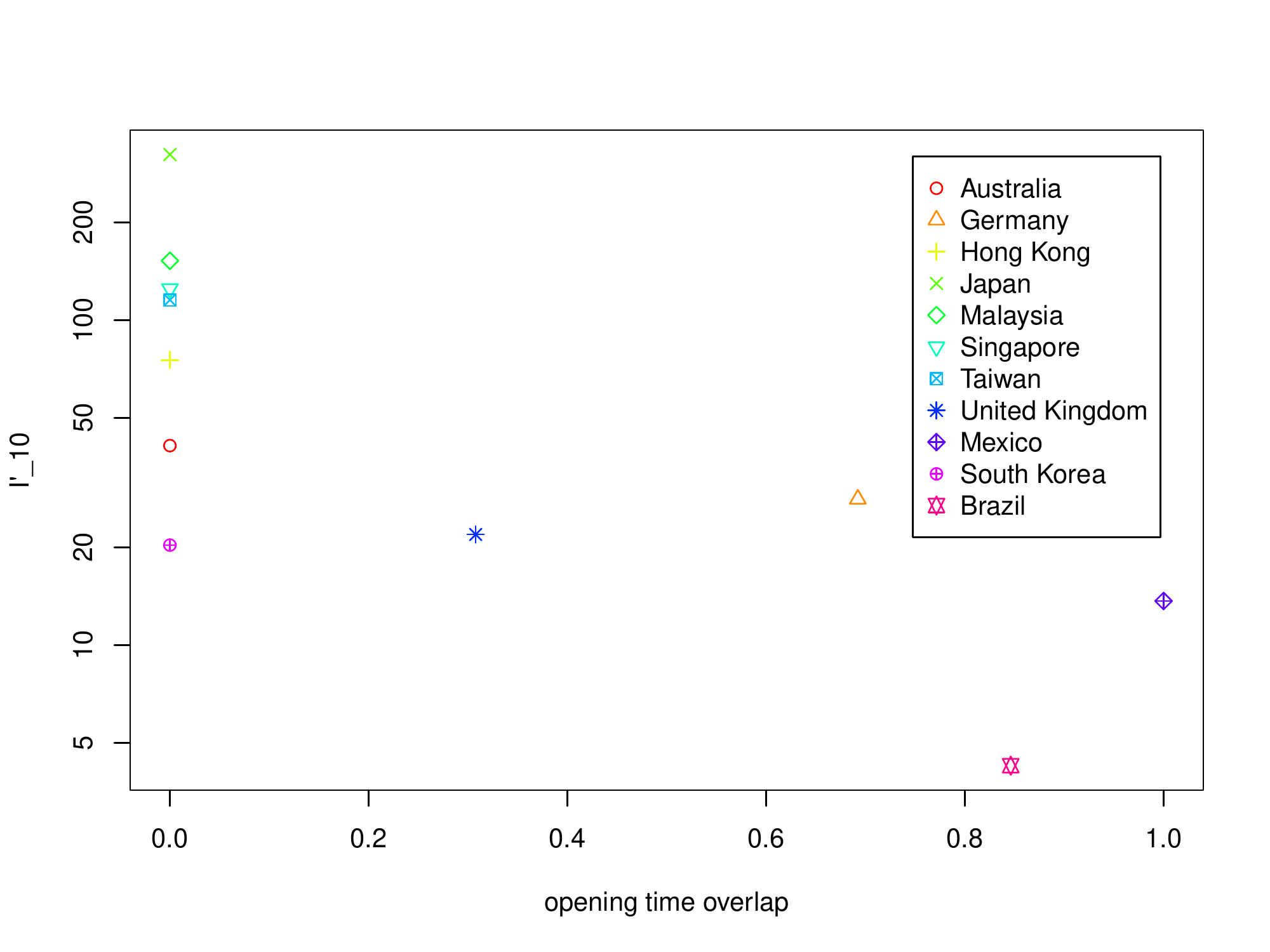}
\caption{Relationship between inefficiency and opening time overlap of the NYSE with the country markets, for the country ETFs from EWA to EWZ in Table \ref{tab:ETFs}. The graphed score $I'_{10} = I_{10}^\text{res} + 5$ is a shifted version of the inefficiency score $I_{10}^\text{res}$, in order to be positive for the logarithmic plot. The opening time overlap on the x axis is the time when both the NYSE and the country market are simultaneously open, divided by the NYSE opening time (6.5 hours).}
\label{fig:ineff-score_opening-time-overlap}
\end{figure}
In Figure \ref{fig:ineff-score_opening-time-overlap} we show the relationship between inefficiency and opening time overlap of the NYSE with the country markets, for the mentioned country ETFs. The markets of the Asian countries of Japan, Malaysia, Singapore, Taiwan, Hong Kong, as well as the Australian market, are closed when the corresponding ETFs are being traded at the New York Stock Exchange. Therefore these assets are traded while the tracked index has no dynamics. There is instead some time overlap in the opening times of the New York Stock Exchange and of European markets, while there is great overlap of the former with the markets of Mexico and Brazil. The trading dynamics of the ETFs on these last two countries can therefore rely on a simultaneous evolution of the corresponding indices.

We remark that, if the two mechanisms that we propose as possible explanations to what we observe in the inefficiency rankings are legitimate, it may well be that the two things are related. It may be that those ETFs that track indices of markets that are closed during the ETFs' trading time are deliberately given a low price, since their dynamics is only coarsely determined. However, we could not find any founded indication in this respect.

\section{Ternary alphabet} \label{sec:ternary_alphabet}

\subsection{Ternary discretisation of returns}
As we point out in Section \ref{sec:binary_alphabet}, a symmetrical binary discretisation of returns that takes into account all the information does not exist. The most natural one is defined by \ref{eq:ret_2s_symbolisation}, but it ignores all the zero returns, which correspond to intervals of price stationarity. This waste of information is larger as one moves to higher frequencies, since the probability of observing a price change in a fixed interval decreases with the decreasing of the interval length. In our data, at frequencies of 1 and 5 minutes the amount of zero returns is huge. In Section \ref{sec:binary_alphabet} we showed how much information on market efficiency can be extracted even ignoring the zero returns. In this section we instead use all the returns.

The idea of ternary-alphabet discretisations of returns is that a symbol represents a stability basin, encoding all returns in a neighbourhood of zero. Negative and positive returns lying outside of this basin are encoded with the two other symbols. In the papers \cite{Shmilovici_etal:2003} and \cite{Giglio_etal:2008}, the three-symbol discretisation of returns is performed according to the following definition,
\begin{equation}
	s_t = \left \{
		\begin{array}{ll}
			0 & \quad \text{if } r_t < -b\\
			1 & \quad \text{if } -b \leq r_t \leq b\\
			2 & \quad \text{if } r_t > b
		\end{array} \right .
	,
\end{equation}
with a threshold $b = 0.0025$. In \cite{Shmilovici_etal:2009}, high frequency exchange rates differences are discretised in a similar fashion with a threshold equal to three pips (i.e.~$b = 0.0003$). We argue that there are numerous problems in fixing an absolute threshold in these discretisations. The main objection is that a fixed symbolisation scheme does not take into account the heteroskedasticity of the series. Time series of returns are known to display periods of different volatility, that is, periods with different average absolute size. We believe that, in such contexts, a ternary discretisation of returns should possess a character of variability, in order to consistently adapt to the volatility dynamics. The risk in not doing so is that the memory properties of the volatility are encoded in the symbolised series and spurious dependencies are introduced.

There are also other reasons that make fixed-threshold ternary symbolisations inadequate. First of all, different assets have different distributions of returns (or rates differences in the case of currency exchanges), so that a fixed neighbourhood of zero includes portions of the return distributions which are different across the assets. This introduces discrepancies in treating the time series that can potentially affect the results of the analyses. Secondly, the distribution of returns also varies as the sampling frequency varies, so that choosing a fixed symbolisation scheme for different frequencies (as in \cite{Shmilovici_etal:2009}) appears inappropriate. Finally, the three-symbol discretisation can be applied  not only to the raw returns, but also to returns filtered for the intraday pattern as defined by ( \ref{eq:deseasonalised_return}) or to standardised returns as defined by (\ref{eq:return_standardisation}). These latter two series, as well as other possible series obtained by processing the returns in some other way, range on different scales and therefore the ternary symbolisations with fixed thresholds are not the proper way to deal with their discretisation.

Concerning the three-symbol discretisations of time series, we propose a more flexible approach, which is also rather general. We define the thresholds for the symbolisation to be the two tertiles of the distribution of values taken by the time series. More formally, if $\{ r_1,\ldots,r_N \}$ is the time series, we define its tertile-symbolised series by
\begin{equation} \label{eq:tertile_symbolisation}
	s_t = \left \{
		\begin{array}{ll}
			0 & \quad \text{if } r_t < \vartheta_1\\
			1 & \quad \text{if } \vartheta_1 \leq r_t \leq \vartheta_2\\
			2 & \quad \text{if } r_t > \vartheta_2
		\end{array} \right .
	,
\end{equation}
where $\vartheta_1$ and $\vartheta_2$ denote the two tertiles of the empirical distribution of the time series $\{ r_t \}$.

\subsection{The impact of intraday patterns and volatility on market efficiency measures}
As already pointed out, a three-symbol discretisation of returns, with one symbol encoding returns in a stability basin around zero, embeds to some degree the intraday pattern and the volatility into the symbolic series. Thus, when these components are not properly filtered out, the symbolic series will possess a certain amount of predictability due to the memory and regularity properties of these factors. We now proceed with a quantitative study in this respect.

The whitening procedure that we apply to the series of logarithmic returns $R_t$ starts with removing the intraday pattern, getting the deseasonalised returns $\tilde{R}_t$, and continues with removing the volatility, getting the standardised returns $r_t$. We further treat the standardised returns to remove any ARMA component that may be due to microstructure factors, thus getting also a series of ARMA residuals $\epsilon_t$. We symbolise all these series with tertile thresholds as in (\ref{eq:tertile_symbolisation}) and estimate the Shannon entropy of the symbolic series to measure their degree of randomness. By doing so, we can assess to what degree the intraday pattern, the volatility and the microstructure contribute to create regularities in the return time series.

We show in Figure \ref{fig:h8_raw_deseas_stand_resid} the values of the Shannon entropy $\tilde{h}_8$ for the 1-minute and 5-minute series of raw returns $R_t$, deseasonalised returns $\tilde{R}_t$, standardised returns $r_t$ and ARMA$(\hat{p},\hat{q})$ residuals of standardised returns, where $\hat{p}$ and $\hat{q}$ are the ones corresponding to the minimum BIC value, among all the models ARMA$(p,q)$ with $p+q \leq 5$. Similar features to those discussed below also hold for results obtained with the entropies $\tilde{h}_k$, with $k \leq 10$ and $k \neq 8$.
\begin{figure}[ht!]
\centering
\includegraphics[width=\textwidth]{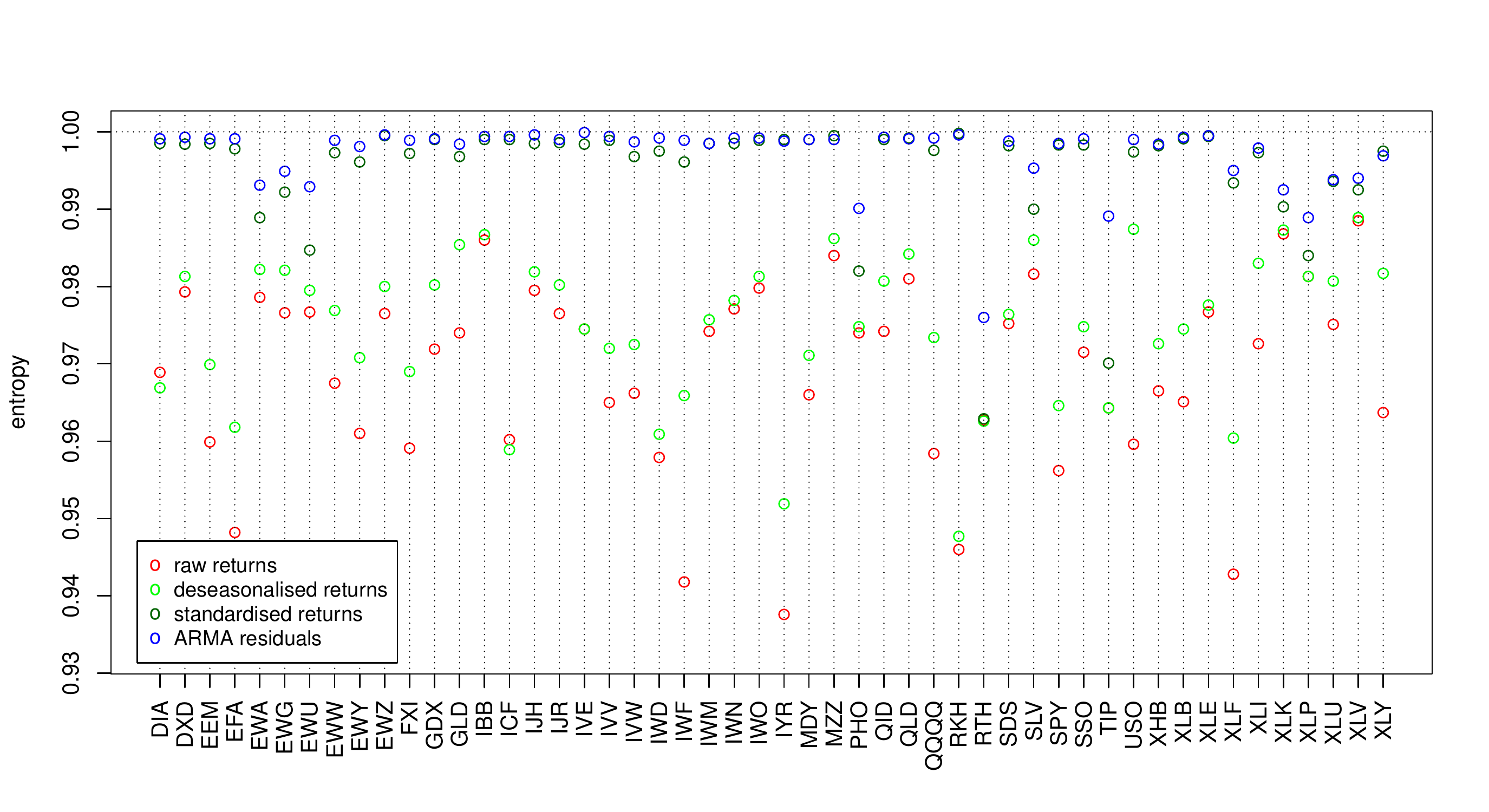}
\includegraphics[width=\textwidth]{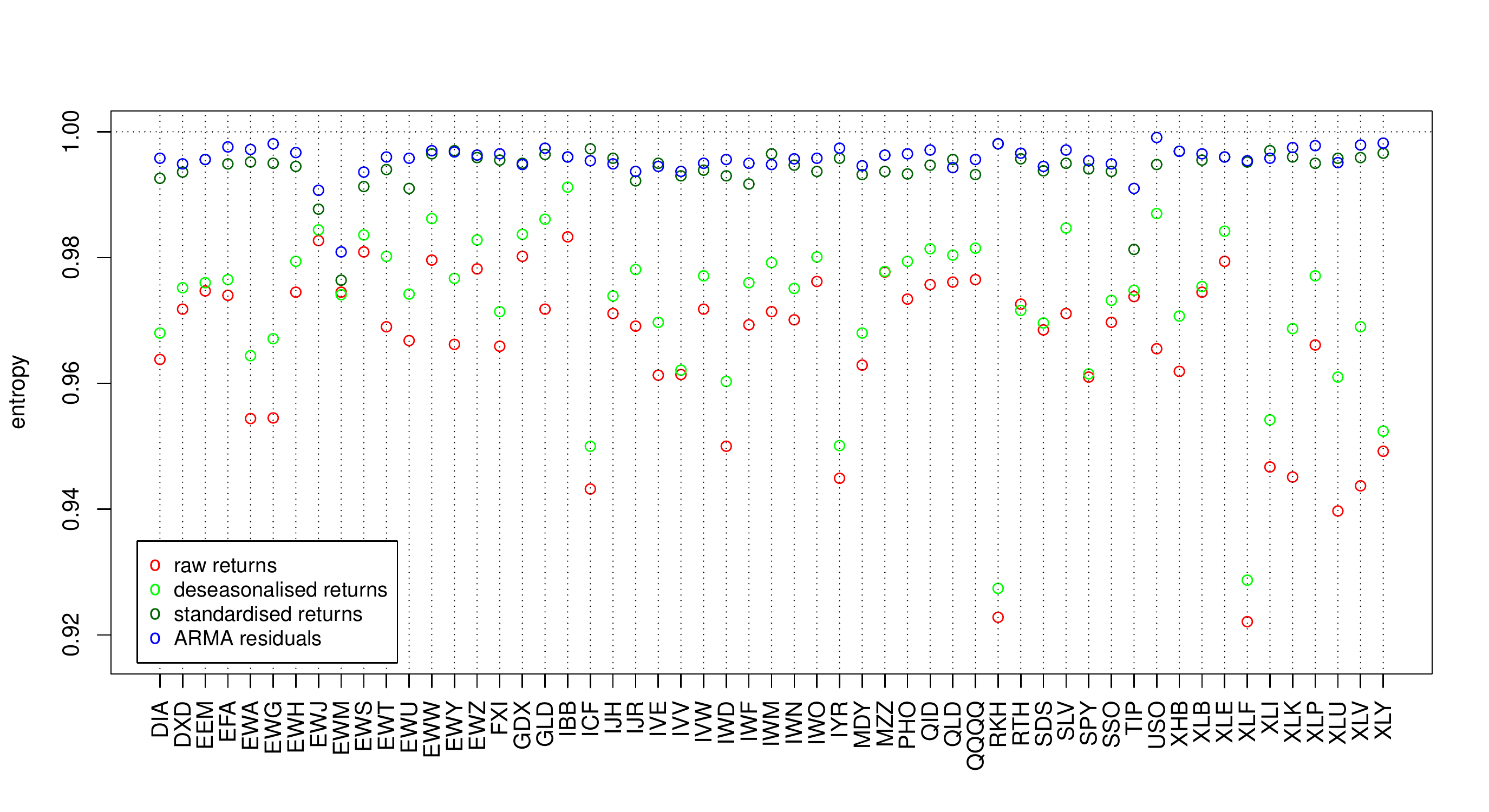}
\caption{The entropy $\tilde{h}_8$ of the 1-minute (top) and 5-minute (bottom) series of raw returns (red circles), deseasonalised returns (light green circles), standardised returns (dark green circles), ARMA residuals (blue circles), for the ETFs. Results for the ETFs EWH, EWJ, EWM, EWS, EWT are considered for the 5-minute series only.}
\label{fig:h8_raw_deseas_stand_resid}
\end{figure}

For five 1-minute series (ETFs EWH, EWJ, EWM, EWS, EWT) it happens that the proportion of zero returns is so large that the two tertile thresholds $\vartheta_1$ and $\vartheta_2$ are equal to zero. The corresponding symbolic series thus have an unbalanced number of the three symbols. In order to avoid comparing series with balanced distributions of symbols with series with unbalanced distributions, we ignore the results on the latter. We instead report results for all 5-minute series, since at this frequency the tertiles of the return distributions never happen to equal zero.

Concerning the results reported in Figure \ref{fig:h8_raw_deseas_stand_resid}, we notice that in the vast majority of cases the symbolised series of the raw returns are the most predictable, as is expected since they still carry all the regularities of the intraday pattern and the correlation due to the volatility and the microstructure. However, there are cases (DIA, ICF, RTH for the 1-minute series; EWM, RTH for the 5-minute series) in which removing the intraday pattern from the raw returns leads to series of lower entropy. This may seem to be not possible, but note that the tertile values $\vartheta_1$ and $\vartheta_2$ defining the symbolisation of the series change in an unpredictable manner when passing from the raw return series $R_t$ to the deseasonalised $\tilde{R}_t$.

The most noteworthy results of the entropy estimates reported in the two graphs of Figure \ref{fig:h8_raw_deseas_stand_resid}, however, are those regarding the standardisation of the returns by the volatility. In almost all the cases, it can be seen that the passage from the deseasonalised returns to the standardised ones is responsible for the largest increase in the entropy. Averaging across all the ETFs, a percentage of around 18\% of the entropy increase obtained with the three whitening procedures is attributable to the removal of the intraday pattern, while about 62\% is due to the standardisation for the volatility and 20\% is the entropy gain given by the removal of correlations due to microstructure effects. This means that the removal of the volatility from the return series increases their randomness more than the other two procedures taken alone. Put another way, the volatility gives the return series a huge amount of predictability and it does so much more, on average, than the daily seasonality and the microstructure effects. This result should be regarded as a convincing demonstration of the fact that, when studying the randomness of a three-symbol discretised time series, the volatility must be filtered out. Omitting this operation would give results that tell more on the predictable character of the volatility than on, for example, market efficiency.

The last refinement that we do on standardised return series is the removal of dependence structures due to market microstructure effects. As we did in Section \ref{sec:binary_alphabet} when dealing with two-symbol discretisations, we perform it by taking the residuals of the ARMA$(p,q)$ model that best describes the series of the standardised returns $r_t$. This last procedure has the general effect of removing some remaining predictability, further contributing to move return series towards perfect efficiency. We note however that this is not always the case: the entropy of the ARMA residuals is lower than that of standardised returns for the 1-minute series of IYR, MZZ, QLD, RKH, XLY and for the 5-minute series of EWY, GDX, ICF, IJH, IVE, IWM, QLD, XLI, XLU. We think that this counterintuitive behaviour might be caused by the large amount of zero returns present in the data, in correspondence of which some spurious randomness is \emph{introduced} by taking the ARMA residuals.

Overall, the results shown in Figure \ref{fig:h8_raw_deseas_stand_resid} clearly indicate that much of the apparent inefficiency (that is, the predictability of the raw return series) is due to three factors: the daily seasonality, the volatility and the microstructure effects. For example, note in the 1-minute picture of Figure \ref{fig:h8_raw_deseas_stand_resid} how four of the most apparently inefficient ETFs (namely, EFA, IWF, IYR, RKH) see their predictability almost vanish after their series are filtered for these three factors.

We remark that, although the daily seasonality and the microstructure effects are characteristic of high frequency time series, the memory properties of the volatility play a major role also in low-frequency (for example, daily) data. Therefore, we conclude that studies on measuring relative efficiency as randomness of symbolised return series should carefully deal with the issue of removing the volatility, for failing to do so would heavily affect the results, in fact invalidating them.

Our three-step procedure aims at removing from the return series all the predictability that is imputable to factors having their own dynamics that can be modelled. What remains in the filtered series that separates them from being purely random is what we assess as the true inefficiency of the assets. It may be due to other features of the market that we do not take into account or to more \emph{fundamental} aspects.

\section{Conclusions} \label{sec:conclusions}
In this paper we study how relative market efficiency can be measured from high frequency data, filtering out all the known sources of regularity, such as the intraday pattern, the persistence of volatility and the microstructure effects. To this aim we employ the Shannon entropy as tool to measure the randomness degree of binary and ternary symbolised time series of 55 ETFs.

With an analytical study of the entropy of the AR(1) and MA(1) processes, we develop an original theoretical approach to discount microstructure effects from the measuring of efficiency of return time series which exhibit simple autocorrelation structures. A very interesting topic for future work is the extension of the analytical results found in this paper to higher entropy orders and to more complex ARMA processes.

A more empirical approach, in which we choose the ARMA($p, q$) process that best describes each time series, allows us to filter out the linear microstructure effects for all the ETFs and to measure residual regularities. Results show that in some cases a large part of the regularities is explained by the linear dependence structure, while in other cases the ARMA residuals still contain some (nonlinear) regularities. By rigorously testing the ARMA residuals for efficiency, we reject the hypothesis of efficiency for the large majority of the ETFs. We also rank the ETFs according to an inefficiency score and find that the rankings are not very sensitive to the choice of the entropy order.

We find a strong relationship between low entropy and high relative tick size. This is explained by noting that an asset's price with a large relative tick size is subject to more predictable changes. We also notice that the inefficiency scores for the country ETFs can be related to the opening time overlap between the country markets and the NYSE, where the ETFs are exchanged. We hypothesise that those ETFs that track indices of markets that are closed during the ETFs' trading time are deliberately given a low price, since their dynamics cannot rely on a simultaneous evolution of the corresponding indices.

With the 3-symbol discretisation, we find that the removal of the volatility is responsible for the largest amount of regularity in the return series. This effect amounts to the 62\% of the total entropy gain and thus it is larger, on average, than the combined effect of the intraday (18\%) and microstructure (20\%) regularity. This result convincingly demonstrates that, when studying the randomness of a three-symbol discretised time series, the volatility must be filtered out. Omitting this operation would give results that tell more on the predictable character of the volatility than on, for example, market efficiency.

\newpage
\appendix

\section{Entropy estimation} \label{sec:entropy_estimator}
Let us suppose to have $N$ points randomly distributed into $M$ boxes according to probabilities $p_1, \ldots, p_M$. The simplest way we can estimate the entropy $H = - \sum_{i = 1}^M p_i \log p_i$ is by replacing the $p_i$'s with the observed frequencies $\frac{n_i}{N}$, $i = 1, \ldots, M$, where $n_i$ represents the number of points in box $i$. Unfortunately, the estimator
\begin{equation*}
	\hat{H}^{\textrm{naive}} = - \sum_{i = 1}^M \frac{n_i}{N} \log \frac{n_i}{N}
\end{equation*}
is strongly biased, meaning that by employing it we would make a large systematic error in the entropy estimates. A much more accurate estimator is the one derived by Grassberger in \cite{Grassberger:2008} and defined by
\begin{equation} \label{eq:Grassberger_estimator}
	\hat{H}^{\textrm{G}} = - \sum_{i = 1}^M \frac{n_i}{N} \log \frac{\mathrm{e}^{G_{n_i}}}{N} ,
\end{equation}
where the terms $G_n$ are defined by $G_{2n+1} = G_{2n} = - \gamma - \log 2 + \frac{2}{1} + \frac{2}{3} + \frac{2}{5} + \ldots + \frac{2}{2n-1}$, with $\gamma = 0.577215\ldots$ representing Euler's constant. For the bias $\Delta \hat{H}^{\textrm{G}} = \bbE [\hat{H}^{\textrm{G}}] - H$ of this estimator, it holds
\begin{equation*}
	0 < - \Delta \hat{H}^{\textrm{G}} < 0,1407\ldots \times \frac{M}{N} .
\end{equation*}

In order to make it clear how the notations used in this section relate to the ones used in the rest of the paper, we remark that we use Grassberger's estimator to estimate the entropies of order $k$ of sources with binary or ternary alphabet, so that we typically have $M = 2^k$ or $M = 3^k$. The number $N$ and the $n_i$'s represent respectively the number of non-overlapping $k$-blocks in the observed symbolic sequence and the number of occurrences of each one of the symbolic strings of length $k$.

\section{Details on the Shannon entropy of the processes AR(1) and MA(1)} \label{sec:app_AR1_MA1_entropy}

\subsection{A geometric characterisation of the Shannon entropies $H_k^{AR(1)}$} \label{sec:geometric_characterisation}
In this section we give a general characterisation of the Shannon entropies $H_k^{AR(1)}$, for all $k = 1, 2, \ldots$, in terms of the entropy of some partition of the unit sphere $\bbS^{k-1} = \{ \bfx \in \bbR^k \, | \, ||\bfx|| = 1 \}$. In principle, the same path can be followed to obtain an analogous general characterisation for the entropies $H_k^{MA(1)}$. However, this does not seem to be feasible, since for the process MA(1) the general formulas for the conditional distributions of $X_k$, given $X_1,X_2,\ldots,X_{k-1}$, are not as simple as the ones for the process AR(1), which is Markov.

Let $s_1^k \in \{ 0,1 \} ^k$ be one of the $2^k$ binary strings of length $k$. According to the symbolisation (\ref{eq:binary_symbolisation}), it corresponds to the event $\{ X_1 \in I_1, \ldots , X_k \in I_k \}$, where $I_i = (-\infty,0)$ if $s_i = 0$ and $I_i = (0,\infty)$ if $s_i = 1$. For the process AR(1) we have
\begin{equation*}
\left \{
\begin{array}{lcl}
	X_1 & \! \! \! \sim & \! \! \! \calN \left( 0,\frac{\sigma^2}{1-\phi^2} \right)\\
	X_2 & \! \! \! \sim & \! \! \! \calN (\phi X_1,\sigma^2)\\
		& \! \! \! \vdots &\\
	X_k & \! \! \! \sim & \! \! \! \calN (\phi X_{k-1},\sigma^2)
\end{array}
\right .
\end{equation*}
and therefore
\begin{multline} \label{eq:probability_X_t}
\mu (s_1^k) = \int \limits_{I_1} \! \! \frac{1}{\sqrt{2 \pi} \frac{\sigma}{\sqrt{1-\phi^2}}} e^{-\frac{1}{2} \left( \frac{X_1}{\frac{\sigma}{\sqrt{1-\phi^2}}} \right)^2}
 			  \! \! \! \int \limits_{I_2} \! \! \frac{1}{\sqrt{2 \pi} \sigma} e^{-\frac{1}{2} \left( \frac{X_2 - \phi X_1}{\sigma} \right)^2}
 			  \! \ldots \\
 			  \! \ldots
 			  \! \int \limits_{I_k} \! \! \frac{1}{\sqrt{2 \pi} \sigma} e^{-\frac{1}{2} \left( \frac{X_k - \phi X_{k-1}}{\sigma} \right)^2}
 			  \dd X_k \, \ldots \, \dd X_2 \, \dd X_1 .
\end{multline}

Let us now consider the normalising linear transformation
\begin{equation} \label{eq:normalisation}
\left \{
\begin{array}{rcl}
Y_1 & = & \frac{1}{\frac{\sigma}{\sqrt{1-\phi^2}}} X_1\\
Y_2 & = & \frac{X_2 - \phi X_1}{\sigma}\\
& \vdots &\\
Y_k & = & \frac{X_k - \phi X_{k-1}}{\sigma}
\end{array}
\right . ,
\end{equation}
described in matrix form by $Y = A_\phi X$, with
\begin{equation*}
A_\phi = \frac{1}{\sigma}
\begin{pmatrix}
\sqrt{1-\phi^2} & 0 & 0 & \ldots & 0 & 0\\
-\phi & 1 & 0 & \ldots & 0 & 0\\
0 & -\phi & 1 & \ldots & 0 & 0\\
\vdots & \vdots & \vdots & \ddots & \vdots & \vdots\\
0 & 0 & 0 & \ldots & 1 & 0\\
0 & 0 & 0 & \ldots & -\phi & 1\\
\end{pmatrix} .
\end{equation*}
The random variables $Y_t$ are $\calN (0,1)$ and Equation (\ref{eq:probability_X_t}) can be written
\begin{equation} \label{eq:probability_Y_t}
\mu (s_1^k) = \int \limits_{I'} \! \! \frac{1}{(2 \pi)^\frac{k}{2}} e^{-\frac{1}{2} \left( Y_1^2 + Y_2^2 + \ldots + Y_k^2 \right)}
			  \, \dd Y_1 \, \ldots \, \dd Y_k ,
\end{equation}
where $I' = A_\phi (I_1 \times I_2 \times \ldots \times I_k)$. The integral in Equation (\ref{eq:probability_Y_t}) is equal to the fraction of $k$-dimensional solid angle determined by the cone $I'$, or, equivalently, to the fraction of hypersphere $\frac{\lambda (I' \cap \bbS^{k-1})}{\lambda (\bbS^{k-1})}$, being $\lambda$ the Lebesgue measure.

The $2^k$ solid angles of the form $I'$, corresponding to the strings of $k$ binary symbols, are those that result from sectioning the $k$-dimensional Euclidean space with the hyperplanes $\pi_1$, $\pi_2$, $\pi_3$, \ldots, $\pi_k$ of equations
\begin{align*}
\frac{\phi^{k-1}}{\sqrt{1-\phi^2}} x_1 \phantom{+ \phi^{k-2} x_2 + \phi^{k-3} x_3 + \ldots + \phi x_{k-1} + x_k} & = 0\\
\frac{\phi^{k-1}}{\sqrt{1-\phi^2}} x_1 + \phi^{k-2} x_2 \phantom{+ \phi^{k-3} x_3 + \ldots + \phi x_{k-1} + x_k} & = 0\\
\frac{\phi^{k-1}}{\sqrt{1-\phi^2}} x_1 + \phi^{k-2} x_2 + \phi^{k-3} x_3 \phantom{+ \ldots + \phi x_{k-1} + x_k} & = 0\\
\vdots \phantom{x_2 + \phi^{k-3} x_3 + \ldots + \phi x_{k-1} + x_k} & \\
\frac{\phi^{k-1}}{\sqrt{1-\phi^2}} x_1 + \phi^{k-2} x_2 + \phi^{k-3} x_3 + \ldots + \phi x_{k-1} + x_k & = 0 .
\end{align*}
The problem of calculating the measures $\mu (s_1^k)$ in Equation (\ref{eq:blockentropy}) has thus been translated into a purely geometric problem: calculating the solid angles in $\bbR^k$ cut by the hyperplanes $\pi_i$, $i = 1, \ldots, k$. The entropy of Equation (\ref{eq:blockentropy}) is thus nothing else than the entropy of the partition of $\bbS^{k-1}$ determined by the hyperplanes $\pi_i$.

\subsection{Proofs of the propositions of Section \ref{sec:AR1_MA1_entropy}} 

\begin{proof}[Proof of Proposition \ref{prop:parity_h}]
First note that, if $\{ \epsilon_t \}_t$ is a Gaussian white noise, then also $\{ \epsilon_t^\prime \}_t = \{ (-1)^t \epsilon_t \}_t$ is a Gaussian white noise and it is indeed the same process as $\{ \epsilon_t \}_t$ since a Gaussian random variable $\epsilon_t$ has the same distribution as its opposite $- \epsilon_t$. The AR(1) process defined by $X_t^\prime = - \phi X_{t-1}^\prime + \epsilon_t^\prime$ has the MA($\infty$) form
\[ X_t^\prime = \sum_{i=0}^\infty (- \phi)^i \epsilon_{t-i}^\prime = \sum_{i=0}^\infty (-1)^i \phi^i (-1)^{t-i} \epsilon_{t-i} = \sum_{i=0}^\infty (-1)^t \phi^i \epsilon_{t-i} . \]
Thus we have $X_t^\prime = (-1)^t X_t$, for all $t$. This relation between the two continuous-state processes $\{ X_t^\prime \}$ and $\{ X_t \}$ translates into an analogous one for the binary processes $S^\prime = \{ s_t^\prime \}_t$ and $S = \{ s_t \}_t$ defined by discretisation as in (\ref{eq:binary_symbolisation}). This means that a single realisation of the process $\{ X_t \}_t$ (or, equivalently, of the process $\{ \epsilon_t \}_t$) produces two binary sequences $s$ and $s^\prime$ for which it holds $s_t = s_t^\prime$ for even $t$ and $s_t = - s_t^\prime$ for odd $t$. We therefore have a bijective correspondence between realisations of $\{ s_t \}$ and of $\{ s_t^\prime \}$ which also preserves the measure, that is, it holds
\begin{equation} \label{eq:isomorphism_prime}
	\mu_S (s_{t_1}^{t_k}) = \mu_{S^\prime} (s_{t_1}^{t_k \prime}) , \quad \text{for all } k \text{ and all } t_1 \leq \ldots \leq t_k .
\end{equation}
The following diagram provides a picture of the process isomorphism:
\begin{equation*}
	\begin{CD}
		\{ X_t \}  @> ^\prime >> \{ X_t^\prime \}\\
		 @V B VV                     @V B VV     \\
		\{ s_t \}  @> ^\prime >> \{ s_t^\prime \}\\
	\end{CD}
	.
\end{equation*}
From Equation (\ref{eq:isomorphism_prime}) it follows that $H_k^{AR(1)} (\phi) = H_k^{AR(1)} (- \phi)$, which means that (i) is proved.

Equality (ii) is proved in the very same way as for (i), by noting that the MA(1) process defined by $X_t^\prime = \epsilon_t^\prime - \theta \epsilon_{t-1}^\prime$, with $\epsilon_t^\prime = (-1)^t \epsilon_t$ for all $t$, is isomorphic to that defined by $X_t = \epsilon_t - \theta \epsilon_{t-1}$.

Finally, (iii) and (v) follow immediately from (i), while (iv) and (vi) follow immediately from (ii).
\end{proof}

\begin{proof}[Proof of Proposition \ref{prop:complement_string}]
If $\{ \epsilon_t \}_t$ is a Gaussian white noise process defining the process $\{ X_t \}_t$ (either AR(1) or MA(1)), the white noise $\bar{\epsilon} = \{ - \epsilon_t \}_t$ defines the process $\bar{X} = \{ - X_t \}_t$. This is actually isomorphic to the process $X$ itself, since the random variables $\epsilon_t$ have the same distributions as their opposites. The processes $S$ and $\bar{S}$, discretised versions of the processes $X$ and $\bar{X}$, are therefore isomorphic and the thesis follows.
\end{proof}

\begin{proof}[Proof of Proposition \ref{prop:mu(a_b)_AR(1)}]
The quantities $\mu (0 \cdot^i 0)$ and $\mu (0 \cdot^i 1)$ are the probabilities of the events $\{ X_1 < 0 \} \cap \{ X_{i+2} < 0 \}$ and $\{ X_1 < 0 \} \cap \{ X_{i+2} > 0 \}$, respectively. Recall that $X_{i+2} | X_1 \sim \calN \big (\phi^{i+1} X_1,\frac{1-\phi^{2(i+1)}}{1-\phi^2} \sigma^2 \big )$. Thus, proceeding as in Section \ref{sec:geometric_characterisation}, we are left with calculating the measures of the subsets of $\bbS^1$ cut by the lines in $\bbR^2$ given by equations $x_1 = 0$ and $\frac{\phi^{i+1}}{\sqrt{1-\phi^{2(i+1)}}} x_1 + x_2 = 0$. Equalities (\ref{eq:mu00_AR(1)}) and (\ref{eq:mu01_AR(1)}) follow immediately.
\end{proof}

\begin{proof}[Proof of Proposition \ref{prop:mu(a_b)_MA(1)}]
Just as in Proposition \ref{prop:mu(a_b)_AR(1)}, $\mu (00)$ and $\mu (01)$ are the probabilities of the events $\{ X_1 < 0 \} \cap \{ X_2 < 0 \}$ and $\{ X_1 < 0 \} \cap \{ X_2 > 0 \}$, respectively. Since the conditional distribution of $X_2 | X_1$ is $\calN (\frac{\theta}{1+\theta^2} X_1,(\frac{1+\theta^2+\theta^4}{1+\theta^2} \sigma^2))$, we have that $\mu (00)$ and $\mu (01)$ are the relative measures of the subsets of $\bbS^1$ cut by the lines of equations $x_1 = 0$ and $\frac{\theta}{\sqrt{1+\theta^2+\theta^4}} x_1 + x_2 = 0$. Expressions (\ref{eq:mu00_MA(1)}) and (\ref{eq:mu01_MA(1)}) follow straightforwardly.

Finally, equality (\ref{eq:mu_s1.s2_MA(1)}) is easily proved by noting that the conditional distribution of a random variable $X_t$, given $X_{t-i}$ with $i \geq 2$, is the same as its unconditional distribution because $X_t$ and $X_{t-i}$ ($i \geq 2$) are independent.
\end{proof}

\section{Data cleaning} \label{sec:appendix_data_cleaning}

\subsection{Outliers} \label{sec:outliers}
In order to detect and remove outliers, that may be present in high frequency data for example because of errors of transmission, we use the algorithm proposed in~\cite{Brownlees_Gallo:2006}. Though it was originally developed for tick-by-tick prices, we apply it to 1-minute data by setting the parameters suitably. The algorithm is designed to identify and remove the price records which are too distant from a mean value calculated in their neighbourhood. To be precise, a price $p_i$ in the price series is removed if
\begin{equation} \label{eq:outlier}
	|p_i - \bar{p}_i (k)| \geq c \, s_i (k) + \gamma ,
\end{equation}
where $\bar{p}_i (k)$ and $s_i (k)$ are respectively the $\delta$-trimmed\footnote{The $\delta / 2$ lowest and the $\delta / 2$ highest observations are discarded.} sample mean and sample standard deviation of the $k$ price records closest to time $i$, $c$ is a constant amplifying the standard deviation and $\gamma$ is a parameter that allows to avoid cases of zero variance (e.g., when there are $k$ equal consecutive prices).

We take $k = 20$, $\delta = 10\%$, $c = 5$, $\gamma = 0.05$. Results of this outlier detection procedure lead to the removal of a number of 1-minute observations ranging from 9 to 310 across the 55 ETFs. Their distribution with respect to the time of the day shows that these observations occur for the great majority at the very beginning and at the very end of the trading day, suggesting that the algorithm spuriously identifies as outliers some genuine observations where high variability is physiological. However, the number of 1-minute observations detected as outliers is so limited (about one every three days in the worst case) that even spurious removal has negligible impact on the results.

\subsection{Stock splits} \label{sec:splits}
Price data made available by data providers are generally adjusted for stock splits\footnote{A (\emph{forward} or \emph{reverse}) \emph{stock split} is a change decided by the company both in the number of its shares and in the price of the single share, such that the market capitalisation remains constant. A stock split is said to be \emph{$m$-for-$n$} if $m$ new shares are emitted for every $n$ old ones, with a price adjustment from $p$ to $\frac{n}{m} p$. If $m > n$ it is called a \emph{forward stock split}, while if $m < n$ we have a \emph{reverse stock split} (or \emph{stock merge}).} To detect possible unadjusted splits, we check the condition
\begin{equation*}
	|r| > 0.2
\end{equation*}
in the return series. This procedure would detect, for example, a 3-for-2 split or a 4-for-5 merge.

In our data we find four unadjusted splits, which we pointwise remove from the return series.

\subsection{Intraday volatility pattern} \label{sec:intraday_pattern}
As is well known, the volatility of intraday returns has a periodic behaviour. It is higher near the opening and the closing of the market, showing a typical U-shaped profile every day. Moreover, events like the release of news always at the same time, or the opening of another market, contribute to create a specific intraday pattern that is the basic volatility structure of each day. We filter out the intraday volatility pattern from the return series by using the following simple model with intraday volatility factors. If $R_{d,t}$ is the raw return of day $d$ and intraday time $t$, we define the rescaled return
\begin{equation} \label{eq:deseasonalised_return}
	\tilde{R}_{d,t} = \frac{R_{d,t}}{\zeta_t} ,
\end{equation}
where
\begin{equation} \label{eq:zeta_intraday_factors}
	\zeta_t = \frac{1}{N_{\text{days}}} \sum_{d'} \frac{|R_{d',t}|}{s_{d'}} ,
\end{equation}
where $N_{\text{days}}$ is the number of days in the sample and $s_{d'}$ is the standard deviation of absolute intraday returns of day $d'$.

In this paper, we refer to the rescaled returns $\tilde{R}$ defined by Equation (\ref{eq:deseasonalised_return}) as \emph{deseasonalised returns}.

As an example, we report in Figure \ref{fig:intraday_volatility_profile} the intraday volatility profile of the DIA 1-minute return series. 
\begin{figure}[ht!]
\centering
\includegraphics[width=\textwidth]{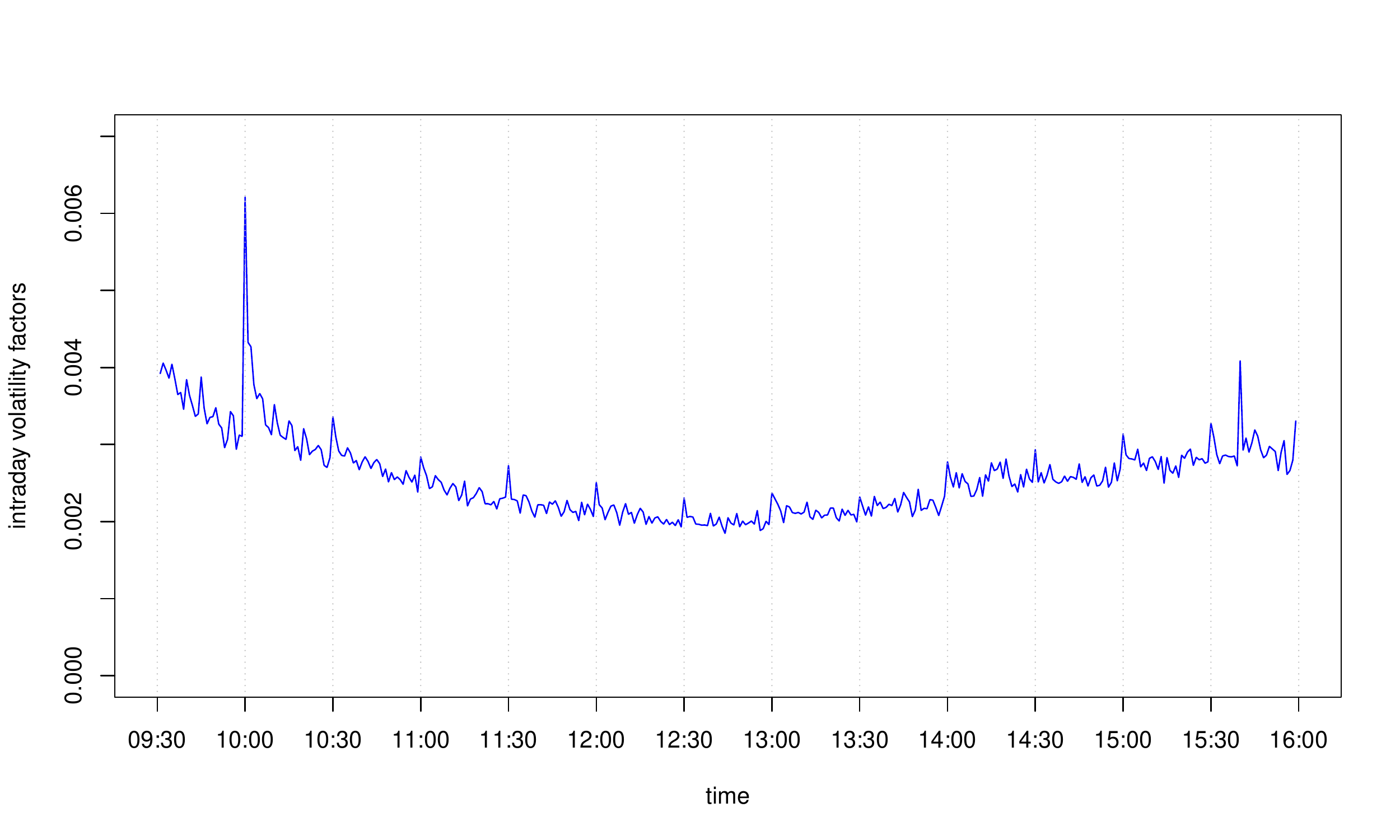}
\caption{Intraday volatility profile of 1-minute returns of the DIA ETF.}
\label{fig:intraday_volatility_profile}
\end{figure}

\subsection{Heteroskedasticity} \label{sec:vol_proxy}
Deseasonalised return series, as defined by Equations (\ref{eq:deseasonalised_return}) and (\ref{eq:zeta_intraday_factors}), possess no residual intraday volatility structure, but they are still heteroskedastic, since different days can have different levels of volatility. In order to remove this heteroskedasticity, we estimate the time series of local volatility $\sigma_t$ and define the \emph{standardised returns} by
\begin{equation} \label{eq:return_standardisation}
	r_t = \frac{\tilde{R}_t}{\sigma_t} .
\end{equation}

As a proxy of the local volatility, we use the \emph{realised absolute variation} (see \cite{Andersen_Bollerslev:1997,Barndorff-Nielsen_Shephard:2004}). Let the logarithmic price $p (t)$ be generated by a process
\begin{equation}
	\dd p (t) = \mu (t) \, \dd t + \sigma (t) \, \dd W (t) ,
\end{equation}
where $\mu (t)$ is a finite variation process, $\sigma (t)$ a c\`adl\`ag volatility process and $W (t)$ a standard Brownian motion. Divide the interval $[0,t]$ into subintervals of the same length $\delta$ and denote by $r_i$ the return at time $i$, $p (i \delta) - p ((i-1) \delta)$. Then the following probability limit holds:
\begin{equation*}
	\text{p}-\lim_{\delta \searrow 0} \delta^{\frac{1}{2}} \sum_{i=1}^{\lfloor t/\delta \rfloor} |r_i| =
	\mu_1 \int_0^t \sigma (s) \, \dd s ,
\end{equation*}
where $\mu_1 = \bbE (|u|) = \sqrt{\frac{2}{\pi}} \simeq 0.797885$, $u \sim \calN (0,1)$.

Our estimator of local volatility is based on these quantity and is defined by the exponentially weighted moving average
\begin{equation} \label{eq:sigma_abs}
	\hat{\sigma}_{\text{abs},t} = \mu_1^{-1} \alpha \sum_{i > 0} (1-\alpha)^{i-1} |r_{t-i}| ,
\end{equation}
where $\alpha$ is the parameter of the exponential average to be specified. We take $\alpha = 0.05$ for the 1-minute data and $\alpha = 0.25$ for the 5-minute data, corresponding to a half-life time of nearly 14 minutes.

Filtering out the heteroskedasticity by means of Equation (\ref{eq:return_standardisation}), with the volatility estimated by Equation (\ref{eq:sigma_abs}), considerably reduces the excess kurtosis of the returns distribution for all the ETFs, thus proving to be an effective method. For instance, for the FXI ETF the excess kurtosis of 1-minute returns equals 11.87 before removing the heteroskedasticity and 0.88 after doing it. Figure \ref{fig:returns_histograms} shows the histograms of FXI intraday 1-minute returns, the intraday pattern being already removed, before and after the removal of heteroskedasticity by means of Equation (\ref{eq:return_standardisation}). As can be seen in the figures, there is a spike at 0 representing the great number of zero returns, due to the discreteness of price.
\begin{figure}[h]
\includegraphics[width=\textwidth]{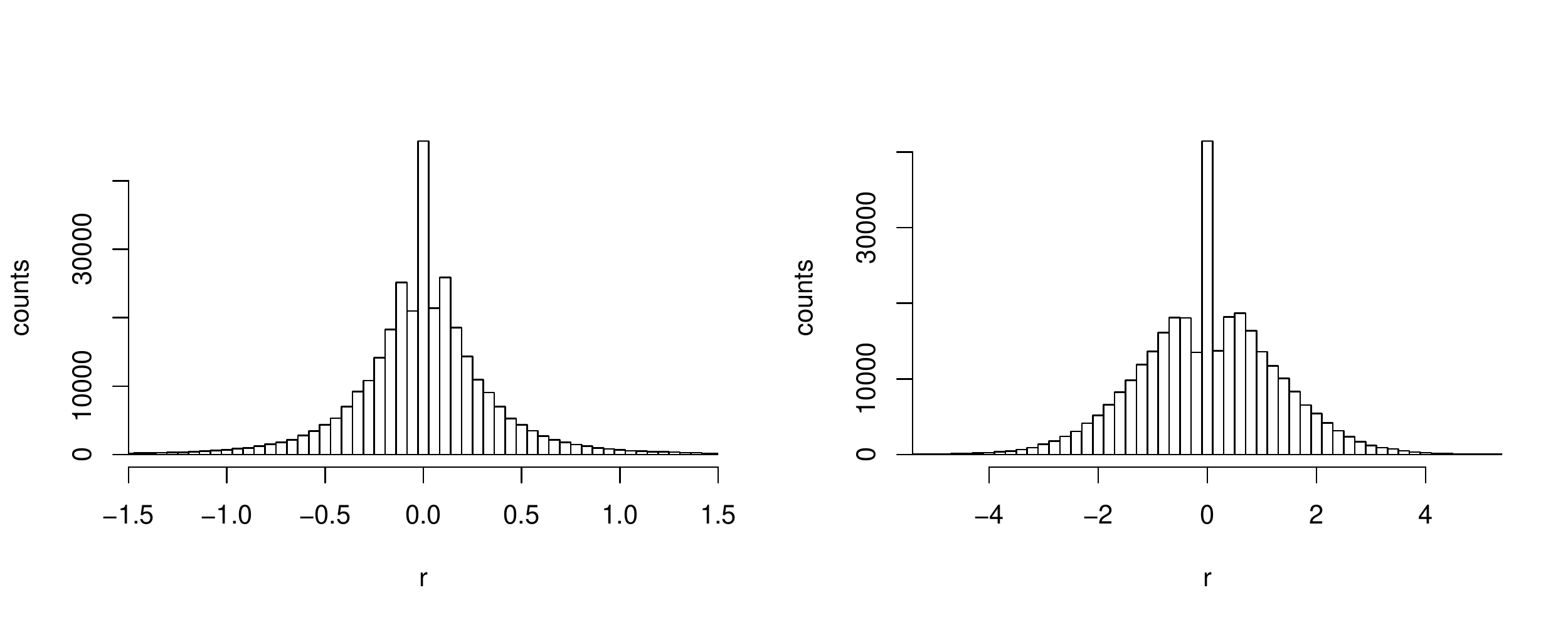}
\caption{Histograms of 1-minute returns of the FXI ETF, before (left) and after (right) removing the heteroskedasticity. The intraday pattern has already been filtered out in both series.}
\label{fig:returns_histograms}
\end{figure}

\newpage
\bibliographystyle{mybibtexstyle}
\bibliography{biblio}

\begin{thebibliography}{10}
\providecommand{\url}[1]{\texttt{#1}}
\providecommand{\urlprefix}{URL }

\bibitem{Cajueiro_Tabak:2004}
Daniel~O. Cajueiro, Benjamin~M. Tabak, \emph{Ranking efficiency for emerging
  markets}, Chaos, Solitons and Fractals \textbf{22} (2004), 349--352.

\bibitem{Giglio_etal:2008}
R.~Giglio, R.~Matsushita, A.~Figueiredo, I.~Gleria, S.~Da~Silva,
  \emph{Algorithmic complexity theory and the relative efficiency of financial
  markets}, EPL \textbf{84} (2008), 48005.

\bibitem{Shmilovici_etal:2003}
Armin Shmilovici, Yael Alon-Brimer, Shmuel Hauser, \emph{Using a stochastic
  complexity measure to check the efficient market hypothesis}, Computational
  Economics \textbf{22} (2003), 273--284.

\bibitem{Shmilovici_etal:2009}
Armin Shmilovici, Yoav Kahiri, Irad Ben-Gal, Shmuel Hauser, \emph{Measuring the
  efficiency of the intraday forex market with a universal data compression
  algorithm}, Computational Economics \textbf{33} (2009), 131--154.

\bibitem{Risso:2009}
Wiston~Adri{\'{a}}n Risso, \emph{The informational efficiency: The emerging
  markets versus the developed markets}, Applied Economics Letters \textbf{16}
  (2009), 485--487.

\bibitem{Oh_Kim_Eom:2007}
Gabjin Oh, Seunghwan Kim, Cheoljun Eom, \emph{Market efficiency in foreign
  exchange markets}, Physica A \textbf{382} (2007), 209--212.

\bibitem{Grassberger:2008}
Peter Grassberger, \emph{Entropy estimates from insufficient samplings}
  (2008). ArXiv:physics/0307138v2.

\bibitem{Taylor:2011}
Stephen~J. Taylor, \emph{Asset price dynamics, volatility, and prediction},
  Princeton University Press (2011).

\bibitem{ait2011ultra}
Yacine A{\"\i}t-Sahalia, Per~A Mykland, Lan Zhang, \emph{Ultra high frequency
  volatility estimation with dependent microstructure noise}, Journal of
  Econometrics \textbf{160} (2011), 160--175.

\bibitem{Weiss:1975}
Gideon Weiss, \emph{Time-reversibility of linear stochastic processes}, Journal
  of Applied Probability  (1975), 831--836.

\bibitem{Brownlees_Gallo:2006}
Christian~T. Brownlees, Giampiero~M. Gallo, \emph{Financial econometric
  analysis at ultra-high frequency: Data handling concerns}, Computational
  Statistics \& Data Analysis \textbf{51} (2006), 2232--2245.

\bibitem{Andersen_Bollerslev:1997}
T.~G. Andersen, T.~Bollerslev, \emph{Intraday periodicity and volatility
  persistence in financial markets}, Journal of Empirical Finance \textbf{4}
  (1997), 115--158.

\bibitem{Barndorff-Nielsen_Shephard:2004}
O.~E. Barndorff-Nielsen, N.~Shephard, \emph{Power and bipower variation with
  stochastic volatility and jumps}, Journal of Financial Econometrics
  \textbf{2} (2004), 1--48.

\end{thebibliography}
\addcontentsline{toc}{section}{Bibliography}

\end{document}